\documentclass[12pt]{article}
\usepackage{appendix}
\usepackage{extarrows}
\usepackage{multirow}%
\usepackage{amsmath,amssymb,amsfonts}%
\usepackage{enumitem}
\usepackage{amsthm}%
\usepackage{mathrsfs}%
\allowdisplaybreaks[4]
\usepackage{tikz}%
\usepackage{subcaption}
\usepackage[numbers,sort&compress,comma,square]{natbib}
\usepackage{booktabs} % 导入booktabs宏包
\usepackage{hyperref}

\hypersetup{
	colorlinks=true,
	anchorcolor=yellow,
	linkcolor=cyan,
	filecolor=blue,
	urlcolor=cyan,
	citecolor=green,
}
\usepackage{cleveref}
\crefformat{equation}{Eq.~(#2#1#3)}

\usepackage[margin=1in]{geometry}

\theoremstyle{definition}

\theoremstyle{plain}
\newtheorem{lem}{Lemma}[section]
\newtheorem{thm}{Theorem}[section]
\newtheorem{cor}{Corollary}[section]
\newtheorem{con}{Conjecture}[section]
\newtheorem{prop}{Proposition}[section]
\theoremstyle{remark}

\newtheorem{ex}{\bf Example}

\newtheoremstyle{case}{}{}{}{}{}{:}{ }{}
\theoremstyle{case}

\DeclareMathOperator{\ii}{\mathrm{i}}
\DeclareMathOperator{\R}{\mathbb{R}}

\begin{document}

\title{\textbf{Zero transfer on mixed graphs}}

\author{
	Xingkun Song$^{1,2}$\thanks{Email: \href{mailto:xksong@126.com}{xksong@126.com}}
	\quad
	Huiqiu Lin$^{3}$\thanks{Corresponding author. Email: \href{mailto:huiqiulin@126.com}{huiqiulin@126.com}}\\[2ex]
	{\small $^{1}$ School of Mathematics and Statistics, Qinghai Minzu University,}\\
	{\small Xining, Qinghai 810007, P.R. China}\\[1ex]
	{\small $^{2}$ Qinghai Institute of Applied Mathematics,}\\
	{\small Xining, Qinghai 810007, P.R. China}\\[1ex]
	{\small $^{3}$ School of Mathematics, East China University of Science and Technology,}\\
	{\small Shanghai 200237, P.R. China}
}

\date{}
\maketitle

\begin{abstract}
	In this paper, we investigate zero transfer on mixed graphs. Zero transfer is a quantum walk phenomenon in which the transition amplitude between two vertices is identically zero for all times, so that no quantum state transfer occurs between them. Using the Hermitian adjacency matrix, we derive necessary and sufficient conditions for zero transfer in mixed graphs. We then specialize these criteria to oriented circulant graphs, obtaining nonexistence results for prime order, structural restrictions for even order, and exhaustive computational classifications for small orders.
\end{abstract}

\begin{flushleft}
	\textbf{Keywords:} zero transfer; mixed graphs; oriented circulant graphs; Hermitian adjacency matrix.
\end{flushleft}
\textbf{AMS Classification:} 05C50; 15A18; 81P45; 81P68

\section{Introduction}

For over two decades, quantum state transfer on graphs has been an important topic in quantum information science and quantum computation. In this context, a network of interacting particles is modeled as a graph, where vertices represent quantum states (or qubits) and edges denote their interactions. The evolution of the system is governed by the continuous-time Schr\"{o}dinger equation, which gives rise to a continuous-time quantum walk. Such quantum walks are usually generated by the adjacency matrix, the Laplacian matrix, or other matrices associated with graphs.

One important concept in quantum information is perfect state transfer, introduced by Bose in 2003~\cite{Bo03}, which describes the transfer of quantum states between vertices in a graph. A graph is said to have perfect state transfer from vertex $u$ to vertex $v$ if there exists a time $t$ at which the transition probability from $u$ to $v$ is exactly one. In undirected graphs, because of the symmetry of the underlying Hamiltonian, perfect state transfer typically occurs simultaneously in both directions, that is from $u$ to $v$ and from $v$ to $u$.

However, for oriented graphs, where edges have directions, this symmetry is broken. To study quantum walks on such graphs, researchers have proposed the Hermitian adjacency matrix, which remains Hermitian while encoding edge orientations~\cite{CFGHST14,CGKST17,CCC23,S22,SL22}. This asymmetry leads to phenomena that do not occur in undirected graphs. For instance, perfect state transfer may occur only in one direction, which is referred to as one-way perfect state transfer. Another related phenomenon is multiple state transfer, in which transfer may involve more than two vertices. In contrast, zero transfer means that a quantum state never transfers from one vertex to another at any time~\cite{SPFW19}. Beyond being a theoretical counterpart to perfect state transfer, zero transfer has potential relevance in quantum architecture design. It can be used to model quantum shielding or path isolation, thereby preventing the leakage of quantum information into unwanted regions of a quantum network.

The concept of zero transfer was introduced by Sett, Pan, Falloon, and Wang in 2019~\cite{SPFW19}. It describes the situation in which the transition probability from vertex $u$ to vertex $v$ is always zero. While perfect state transfer has been widely studied (see~\cite{CG21b,Go08,Go12a,Go12b}), zero transfer remains less developed. Some recent works have obtained partial results on this problem. For example, Coutinho and Godsil~\cite{CG21b} characterized infinite families with zero transfer using signed adjacency matrices and coalescence (1-sums) of two graphs. More recently, Chaves, Chagas, and Coutinho~\cite{CCC23} showed that zero transfer can occur in cycles with weighted adjacency matrices.

The primary goal of this paper is to characterize zero transfer in mixed graphs using the Hermitian adjacency matrix. The remainder of this paper is organized as follows. In Section~\ref{sec::4}, we review the necessary definitions and preliminaries. In Section~\ref{sec::2}, we establish necessary and sufficient conditions for zero transfer in mixed graphs. In Section~\ref{sec::3}, we investigate zero transfer in oriented circulant graphs, prove that it does not occur when the order of the graph is prime, and establish structural and parity restrictions for even order. To complement the analytical results, we use the open-source software SageMath~\cite{sage} to compute and classify all instances of zero transfer in oriented circulant graphs of order up to 20 (see~\Cref{Appendix}).

\section{Definitions and Preliminaries}\label{sec::4}

A \textbf{mixed graph} $\Gamma=(V, E, A)$ consists of a set of vertices $V$, a set of undirected edges $E$, and a set of directed edges (or arcs) $A$.  Specifically, $\Gamma$ is \textbf{undirected} (resp. \textbf{oriented}) if it contains only undirected (resp. directed) edges.

The \textbf{Hermitian adjacency matrix} of $\Gamma$, introduced independently by Liu and Li~\cite{LL15} and by Guo and Mohar~\cite{GM17}, is the complex matrix $H_\Gamma = (h_{uv})_{u,v \in V}$ defined by
\begin{equation*}
	h_{uv} =
	\begin{cases}
		1,    & \text{ if }	\{u,v\}\in E, \\
		\ii,  & \text{ if } (u,v)\in A,   \\
		-\ii, & \text{ if } (v,u)\in A,   \\
		0,    & \text{ otherwise},
	\end{cases}
\end{equation*}
where $\ii=\sqrt{-1}$. Clearly, $H_\Gamma$ is a Hermitian matrix. In particular, if $\Gamma$ is undirected (resp. oriented), then $H_\Gamma=A_\Gamma$ (resp. $H_\Gamma=\ii S_\Gamma$), where $A_\Gamma$ (resp. $S_\Gamma$) is the \textbf{adjacency matrix} (resp. \textbf{skew adjacency matrix}) of $\Gamma$. The eigenvalues of $H_\Gamma$ are also referred to as the \textbf{eigenvalues} of $\Gamma$.

Let $\Gamma=(V, E, A)$ be a mixed graph with Hermitian adjacency matrix $H_{\Gamma}$. We define the  \textbf{transition matrix} of $H_\Gamma$ by
\[
	U(t)=\exp(-\ii t H_\Gamma),
\]
where $\ii=\sqrt{-1}$ and $t$ is a real number. Note that $U(t)$ is unitary for all $t \in \R$, and satisfies the following fundamental properties:
\begin{align}
	\text{\bfseries P1: } & U(-t)=U(t)^{-1}=U(t)^{*}, \label{eq::1a} \\
	\text{\bfseries P2: } & U(t+t')=U(t)U(t'),  \label{eq::1b}
\end{align}
where $\cdot^*$ denotes the conjugate transpose.

For any vertex $v\in V$, let $\mathbf{e}_v$ be the vector defined on $V$ such that $\mathbf{e}_v(u)=1$ if $u=v$ and $\mathbf{e}_v(u)=0$ otherwise. For two vertices $u$ and $v$ of $\Gamma$, we say that $\Gamma$ has \textbf{perfect state transfer} from $u$ to $v$ if there exists a time $t \in \R$ such that
\begin{equation*}
	U(t)\mathbf{e}_u= \gamma \mathbf{e}_v.
\end{equation*}
Here, $\gamma$ is a complex number of modulus 1.

In contrast, we say that $\Gamma$ has \textbf{zero transfer} from $u$ to $v$
if
\[
	U(t)_{u,v} = 0.
\]
for any time $t\geq 0$. Since $U(t)$ is unitary, it follows that $U(t)_{v,u} = 0$.
Therefore, we simply say that there exists zero transfer between vertices $u$ and $v$.

\section{Characterization of zero transfer}\label{sec::2}

Let $\Gamma$ be a mixed graph with Hermitian adjacency matrix $H$. The Taylor expansion of the transition matrix is
\begin{equation}\label{eq:2}
	U(t) = \exp(-\ii t H) = \sum_{k \geqslant 0} \frac{ (-\ii t H)^{k} }{k!}.
\end{equation}

\begin{lem}\label{thm:3}
	Let $\Gamma$ be a mixed graph with Hermitian adjacency matrix $H$, and let $u$ and $v$ be two distinct vertices of $\Gamma$. Then there is zero transfer between $u$ and $v$ if and only if $(H^k)_{u,v}=0$ for every positive integer $k$.
\end{lem}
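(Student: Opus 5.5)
The plan is to read zero transfer directly off the Taylor expansion \eqref{eq:2}, treating $U(t)_{u,v}$ as a power series in $t$. Taking the $(u,v)$ entry of \eqref{eq:2} gives
\[
U(t)_{u,v}=\sum_{k\geqslant 0}\frac{(-\ii)^k t^k}{k!}\,(H^k)_{u,v}.
\]
Because $u\neq v$, the $k=0$ term is $(I)_{u,v}=0$, so only the terms with $k\geqslant 1$ contribute. For any $n\times n$ matrix the exponential series converges absolutely and uniformly on compact sets, since $|(H^k)_{u,v}|\leqslant \|H\|^k$. Hence $f(t):=U(t)_{u,v}$ is an entire function of $t$, and the displayed series is its Maclaurin expansion.

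For sufficiency, suppose $(H^k)_{u,v}=0$ for every $k\geqslant 1$. Then every coefficient of the series vanishes, so $f(t)=0$ for all real $t$, and in particular for all $t\geqslant 0$. This is exactly zero transfer between $u$ and $v$.

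For necessity, suppose $f(t)=0$ for all $t\geqslant 0$. The plan is to recover the Taylor coefficients by differentiating. Since $f$ is real-analytic, it is determined by its values on $[0,\infty)$: an analytic function vanishing on a set with an accumulation point vanishes identically. Alternatively, and without appeal to the identity theorem, the $k$-th one-sided derivative of $f$ at $0^+$ is zero for every $k$. Termwise differentiation of the uniformly convergent power series is legitimate, and it gives
\[
f^{(k)}(0)=(-\ii)^k (H^k)_{u,v}.
\]
Since $(-\ii)^k\neq 0$, it follows that $(H^k)_{u,v}=0$ for every $k\geqslant 1$.

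The only point needing care is this necessity step. Zero transfer is defined only for $t\geqslant 0$, so one must justify that vanishing on a half-line forces all Taylor coefficients to vanish. The analyticity of the matrix exponential handles this, whether through the identity theorem or through one-sided derivatives at $0$. A cross-check is also available: property P1 gives $U(-t)=U(t)^*$, so $U(-t)_{u,v}=\overline{U(t)_{v,u}}$. Since zero transfer is symmetric in $u$ and $v$, as noted in the definition, $f$ also vanishes for $t\leqslant 0$, and so vanishes identically on $\R$.
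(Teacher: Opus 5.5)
Your argument is correct and is essentially the paper's proof: you read off the Maclaurin coefficients $(-\ii)^k(H^k)_{u,v}/k!$ of the entire function $U(t)_{u,v}$, and your one-sided-derivative step makes explicit the passage from vanishing on $t\geq 0$ to vanishing of all coefficients, which the paper leaves implicit. Your closing cross-check is not independent, though. The symmetry $U(t)_{v,u}=0$ asserted in the definition does not follow from unitarity alone, because $U(t)_{v,u}=\overline{U(-t)_{u,v}}$ concerns negative times, so that symmetry itself requires exactly this analyticity argument.
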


\begin{proof}
	By~\Cref{eq:2}, the $(u,v)$-entry of $U(t)$ is the analytic function
	\[
		U(t)_{u,v}=\sum_{k\geq 0}\frac{(-\ii t)^k}{k!}(H^k)_{u,v}.
	\]
	Since $u\neq v$, the term with $k=0$ is zero. Thus $U(t)_{u,v}$ is identically zero if and only if all its Taylor coefficients at $t=0$ vanish, which is equivalent to $(H^k)_{u,v}=0$ for every $k\geq 1$.

\end{proof}

To further simplify the criterion, we use the minimal polynomial of $H$.
Recall that the \textbf{minimal polynomial} $m_H(x)$ of a Hermitian matrix $H$
is the unique monic polynomial of least degree $s$ such that $m_H(H)=0$.
By the {\bf division algorithm} for polynomials, for any positive integer $k$
there exist a polynomial $q(x)$ and a remainder polynomial $r(x)$ with $\deg(r(x))<s$
such that
\begin{equation*}
	H^k = m_H(H) q(H) + r(H)=r(H).
\end{equation*}
Consequently, each power $H^k$ can be written as a linear combination of
$I, H, H^2, \ldots, H^{s-1}$.

Since Hermitian matrices are diagonalizable, the matrix exponential can be expressed explicitly in terms of the matrix powers using the {\bf Lagrange interpolation formula}.
In particular, Ben Taher and Rachidi~\cite{BR02} provided such an expression for a general square matrix $A$.
We adapt their result to the case of Hermitian matrices as follows.

\begin{lem}{\upshape\bfseries (\cite[Corollary 2]{BR02})}\label{lem:1}
	Let $H$ be a Hermitian matrix and suppose that its minimal polynomial is
	$m_H(x)=\prod_{j=1}^s (x-\theta_j)$, where $\theta_1,\ldots,\theta_s$ are the distinct eigenvalues of $H$. Then
	$$
		\exp(-\ii t H)=\sum_{j=1}^s \exp(-\ii t \theta_j)
		\prod_{\substack{d=1 \\ d \neq j}}^s \frac{H-\theta_d I}{\theta_j-\theta_d} .
	$$
\end{lem}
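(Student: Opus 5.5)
The plan is to use the spectral decomposition of $H$. Since $H$ is Hermitian, it is unitarily diagonalizable: $H=\sum_{j=1}^s \theta_j E_j$. Here $\theta_1,\ldots,\theta_s$ are the distinct (real) eigenvalues, and $E_j$ is the orthogonal projection onto the $\theta_j$-eigenspace. These projections satisfy
\[
E_jE_k=\delta_{jk}E_j \quad\text{and}\quad \sum_j E_j=I .
\]
Because $H$ is diagonalizable, the minimal polynomial is exactly $m_H(x)=\prod_{j=1}^s(x-\theta_j)$, with simple roots, as the statement assumes.

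First I will identify each Lagrange product with the corresponding projection. For fixed $j$, set $L_j(x)=\prod_{d\neq j}\frac{x-\theta_d}{\theta_j-\theta_d}$. This is the Lagrange basis polynomial, so $L_j(\theta_k)=\delta_{jk}$. Applying a polynomial to $H$ acts on each eigenspace through its eigenvalue, which gives
\[
L_j(H)=\sum_{k} L_j(\theta_k)E_k=E_j .
\]
Concretely, $\prod_{d\neq j}(H-\theta_d I)$ annihilates every eigenspace except the $\theta_j$-eigenspace, and on that eigenspace it acts as the scalar $\prod_{d\neq j}(\theta_j-\theta_d)$.

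Second, I will apply the same functional calculus to the exponential. Using the series \eqref{eq:2} together with $H^k=\sum_j\theta_j^kE_j$, which follows from the orthogonality relations, we get
\[
\exp(-\ii tH)=\sum_{k\ge0}\frac{(-\ii t)^k}{k!}\sum_j\theta_j^kE_j=\sum_j e^{-\ii t\theta_j}E_j .
\]
Interchanging the two sums is harmless, because the inner sum over $j$ is finite and each scalar series converges absolutely. Substituting $E_j=L_j(H)$ from the first step then yields the stated formula.

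I do not expect a real obstacle. The only point needing care is that the $\theta_j$ are distinct, so the denominators $\theta_j-\theta_d$ are nonzero. This is exactly where diagonalizability, and hence simple roots of $m_H$, is used. An alternative route is to note that $\exp(-\ii tH)=r(H)$, where $r$ is the unique polynomial of degree less than $s$ interpolating $e^{-\ii t x}$ at the $\theta_j$, and then to write $r$ in Lagrange form.
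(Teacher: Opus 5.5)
Your proof is correct and self-contained. The paper gives no argument of its own for this lemma. It imports the formula from Ben Taher and Rachidi, who treat an arbitrary square matrix $A$ by reducing the powers $A^k$ modulo an annihilating polynomial (a linear-recurrence computation, which the division-algorithm remark before the lemma alludes to) and then summing the exponential series. In their setting the Lagrange form is the special case of a minimal polynomial with simple roots, and the paper's only contribution is the observation that Hermitian matrices fall into that case.

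You instead use diagonalizability from the start, evaluate polynomials as $p(H)=\sum_k p(\theta_k)E_k$, and thereby identify each Lagrange product $\prod_{d\neq j}(H-\theta_dI)/(\theta_j-\theta_d)$ with the spectral idempotent $E_{\theta_j}$. This gives a short proof with no external citation. It also shows that the lemma is exactly the spectral formula \Cref{eq::2}, rewritten in the basis $I,H,\dots,H^{s-1}$. In particular, the equivalence of the criteria in Theorem~\ref{lem::3} and Theorem~\ref{thm:2} becomes immediate, since the idempotents and the powers $I,H,\dots,H^{s-1}$ span the same algebra.

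What the cited route buys is generality: it also handles non-diagonalizable matrices, where repeated roots of the minimal polynomial produce terms of the form $t^m e^{t\lambda}$. That generality is irrelevant for Hermitian $H$. Your closing alternative, interpolating $e^{-\ii tx}$ at the $\theta_j$ by a polynomial of degree less than $s$, is the version closest to how the paper frames the lemma.
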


It follows from~\Cref{lem:1} that $\exp(-\ii t H)$ can be written as a linear combination of
$I, H, \ldots, H^{s-1}$. By~\Cref{thm:3}, we obtain the following theorem.

\begin{thm}\label{lem::3}
	Let $\Gamma$ be a mixed graph with Hermitian adjacency matrix $H$, and let $u$ and $v$ be two distinct vertices. Let $m_H(x)$ be the minimal polynomial of $H$, and let $s=\deg m_H$. Then there is zero transfer between $u$ and $v$ if and only if $ (H^k)_{u,v} = 0$ for all integers $k$ with $1\leq  k \leq s-1$.
\end{thm}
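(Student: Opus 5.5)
The plan is to reduce the claim to \Cref{thm:3} by combining the division algorithm with \Cref{lem:1}. Necessity comes straight from \Cref{thm:3}: if there is zero transfer between $u$ and $v$, then $(H^k)_{u,v}=0$ for every positive integer $k$, and in particular for $1\le k\le s-1$.

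For sufficiency, assume $(H^k)_{u,v}=0$ for $1\le k\le s-1$. I would argue directly from \Cref{lem:1}. Each product
\[
\prod_{\substack{d=1\\ d\neq j}}^s \frac{H-\theta_d I}{\theta_j-\theta_d}
\]
is a polynomial in $H$ of degree $s-1$. Expanding it, we can write
\[
\exp(-\ii tH)=\sum_{k=0}^{s-1} c_k(t)\,H^k
\]
for some scalar functions $c_k(t)$. Taking the $(u,v)$-entry gives
\[
U(t)_{u,v}=c_0(t)\,I_{u,v}+\sum_{k=1}^{s-1}c_k(t)\,(H^k)_{u,v}.
\]
Since $u\ne v$, we have $I_{u,v}=0$, so the first term vanishes. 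The remaining terms vanish by hypothesis. Hence $U(t)_{u,v}=0$ for every $t$, which is zero transfer.

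An alternative route goes through \Cref{thm:3} and the division algorithm. For any $k\ge 1$, write $H^k=r_k(H)$ with $\deg r_k<s$. Then $(H^k)_{u,v}$ is a linear combination of $(H^j)_{u,v}$ for $0\le j\le s-1$. The $j=0$ term is $I_{u,v}=0$ because $u\neq v$, and the others vanish by hypothesis. So $(H^k)_{u,v}=0$ for all $k\ge1$, and \Cref{thm:3} gives zero transfer.

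The only point needing care is the $k=0$ (identity) term. It is harmless precisely because $u\ne v$, and this is the step I would state explicitly. The degenerate case $s=1$ (so $H=\theta I$) is also covered: the hypothesis is then vacuous, and indeed $U(t)$ is diagonal. Nothing else is an obstacle.
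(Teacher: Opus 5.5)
Your proposal is correct and takes essentially the same route as the paper. The paper also writes $\exp(-\ii tH)$ as a linear combination of $I,H,\ldots,H^{s-1}$, via \Cref{lem:1} and the division-algorithm remark, and then appeals to \Cref{thm:3}; your explicit observation that the $I_{u,v}$ term vanishes because $u\neq v$ supplies a step the paper leaves implicit.
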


From~\Cref{lem::3}, we immediately obtain the following properties of zero transfer.

\begin{prop}%\label{lem::4}
	Let $\Gamma$ be a mixed graph with Hermitian adjacency matrix $H$, and let $u$ and $v$ be two distinct vertices. If there exists zero transfer between vertices $u$ and $v$, then the following properties hold:
	\begin{enumerate}[label = \bf(\roman*)]
		\item vertices $u$ and $v$ are not adjacent,
		\item for every positive integer $k$, the weighted sum of all walks of length $k$ from $u$ to $v$ is zero. In particular, the number of such walks is even.
	\end{enumerate}
\end{prop}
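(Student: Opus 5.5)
The plan is to derive both parts directly from \Cref{thm:3} (equivalently \Cref{lem::3}), using the standard combinatorial reading of matrix powers. Assuming zero transfer between $u$ and $v$, \Cref{thm:3} gives $(H^k)_{u,v}=0$ for every $k\geq 1$. It remains to interpret these vanishing entries.

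For \textbf{(i)}, I will take $k=1$, which gives $h_{uv}=(H)_{u,v}=0$. By the definition of the Hermitian adjacency matrix, $h_{uv}\in\{1,\ii,-\ii\}$ whenever $\{u,v\}\in E$, $(u,v)\in A$ or $(v,u)\in A$. Hence none of these occurs, and $u$ and $v$ are not adjacent.

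For \textbf{(ii)}, I will expand the matrix product:
\[
(H^k)_{u,v}=\sum_{w_1,\dots,w_{k-1}} h_{u w_1}h_{w_1w_2}\cdots h_{w_{k-1}v}.
\]
A term is nonzero exactly when $u=w_0,w_1,\dots,w_{k-1},w_k=v$ is a walk of length $k$ in the underlying graph, where consecutive vertices are joined by an edge or an arc in either direction. Its value is the weight of the walk, i.e.\ the product of the entries $h_{w_{j}w_{j+1}}$. Thus $(H^k)_{u,v}$ is the weighted sum of walks, and it vanishes for every $k$, which is the first claim of (ii).

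For the parity claim, each weight is a product of factors in $\{1,\ii,-\ii\}$, so it lies in $\{1,-1,\ii,-\ii\}$. Let $a,b,c,d$ be the numbers of walks of length $k$ from $u$ to $v$ with weights $1,-1,\ii,-\ii$. The vanishing sum gives $(a-b)+(c-d)\ii=0$, so $a=b$ and $c=d$. The total number of walks is then $2a+2c$, which is even.

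The only step needing care is this parity argument. It relies on the real and imaginary parts separating because all weights are fourth roots of unity. Beyond that, the proof is a direct unpacking of \Cref{thm:3}.
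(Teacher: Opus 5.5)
Your proposal is correct and follows the paper's proof essentially step for step. You apply the $(H^k)_{u,v}=0$ characterization, take $k=1$ for non-adjacency, and read $(H^k)_{u,v}$ as a weighted walk sum. You then separate real and imaginary parts of the fourth-root-of-unity weights to get the parity claim.
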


\begin{proof}
	By~\Cref{thm:3}, zero transfer implies $(H^k)_{u,v}=0$ for every $k\geq1$. Taking $k=1$ gives $H_{u,v}=0$, so $u$ and $v$ are not adjacent. For general $k$, the entry $(H^k)_{u,v}$ is the sum, over all walks of length $k$ from $u$ to $v$, of the products of the corresponding edge weights. Hence this weighted sum is zero. Since each walk weight belongs to $\{1,-1,\ii,-\ii\}$, a zero sum forces the numbers of weights $1$ and $-1$ to be equal and the numbers of weights $\ii$ and $-\ii$ to be equal; therefore the total number of walks is even.
\end{proof}

The Hermitian adjacency matrix is a normal matrix, which means it admits a spectral decomposition, as described in Godsil and Royle~\cite{GR01}. Specifically, let $H$ be the Hermitian adjacency matrix of a mixed graph with distinct eigenvalues $\theta_0, \theta_1, \ldots, \theta_d$. Then $H$ can be written as

\begin{equation*}
	H = \sum_{r=0}^{d} \theta_{r} E_{\theta_{r}},
\end{equation*}
where $E_{\theta_{r}}$ denotes the \textbf{spectral idempotent} (or \textbf{eigenprojector}) corresponding to the eigenvalue $\theta_r$. The spectral idempotents $E_{\theta_{r}}$ satisfy the following standard properties:

\begin{enumerate}[label = \bf(\roman*)]
	\item $E_{\theta_{i}} E_{\theta_{j}} = 0$, for $i\neq j$;
	\item $E_{\theta_{i}}^2 = E_{\theta_{i}}$, for any $i$;
	\item $\sum_{i=0}^{d}E_{\theta_{i}}=I$;
\end{enumerate}
where $I$ denotes the identity matrix.

By applying the spectral decomposition of $H$, the corresponding \textbf{transition matrix} $U_H(t)$ can be expressed as
\begin{equation}\label{eq::2}
	U_{H} (t) =\exp(-\ii t H)= \sum_{r=0}^{d} e^{-\ii t \theta_r} E_{\theta_{r}},
\end{equation}
where $\theta_r$ is the $r$-th eigenvalue of $H$, and $E_{\theta_{r}}$ is the corresponding spectral idempotent.

For a vertex $a \in V$, the \textbf{eigenvalue support} at $a$ is defined as the set
$$\Phi_a=\{\theta_r: E_{\theta_{r}} \mathbf{e}_a \neq \mathbf{0}\}.$$
According to spectral decomposition, we easily obtain the following theorem.

\begin{thm}\label{thm:2}
	Let $\Gamma$ be a mixed graph with Hermitian adjacency matrix $H$, and let $u$ and $v$ be two distinct vertices. Then the following statements are equivalent:
	\begin{enumerate}[label = \bf(\roman*)]
		\item There is zero transfer between vertices $u$ and $v$.
		\item For every eigenvalue $\theta_r$, $(E_{\theta_{r}})_{u,v}=0$. Equivalently, either $\Phi_u \cap \Phi_v = \emptyset$, or $(E_{\theta_{r}})_{u,v}=0$ for all $\theta_r \in \Phi_u \cap \Phi_v$.
	\end{enumerate}
\end{thm}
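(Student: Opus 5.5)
The plan is to work directly from the spectral expansion \Cref{eq::2}. Taking the $(u,v)$-entry gives
\[
U(t)_{u,v}=\sum_{r=0}^{d} e^{-\ii t\theta_r}\,(E_{\theta_r})_{u,v}.
\]
For (ii)$\Rightarrow$(i) there is almost nothing to do. If every $(E_{\theta_r})_{u,v}$ vanishes, this sum is identically zero for all $t$, which is zero transfer.

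For (i)$\Rightarrow$(ii) the key point is that the functions $t\mapsto e^{-\ii t\theta_r}$ are linearly independent over $\mathbb{C}$ on $[0,\infty)$ when the $\theta_r$ are pairwise distinct. I expect this to be the main step, though it is standard, and I see two ways to prove it.

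\emph{First route (via \Cref{thm:3}).} Zero transfer gives $(H^k)_{u,v}=0$ for all $k\ge 1$, and trivially for $k=0$ since $u\neq v$. Since $H^k=\sum_r\theta_r^kE_{\theta_r}$, we get
\[
\sum_{r=0}^{d}\theta_r^k (E_{\theta_r})_{u,v}=0 \quad\text{for } k=0,1,\dots,d.
\]
This is a Vandermonde system in the unknowns $(E_{\theta_r})_{u,v}$. Its matrix is nonsingular because the eigenvalues are distinct, so all unknowns vanish.

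\emph{Second route (projector as a polynomial).} Write
\[
E_{\theta_j}=\prod_{d'\neq j}\frac{H-\theta_{d'}I}{\theta_j-\theta_{d'}},
\]
which is exactly the coefficient matrix appearing in \Cref{lem:1}. This is a polynomial in $H$. Its $(u,v)$-entry is then a linear combination of the entries $(H^k)_{u,v}$, and these vanish by \Cref{thm:3}, together with $I_{u,v}=0$.

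For the equivalent reformulation inside (ii), I will use that each $E_{\theta_r}$ is a Hermitian idempotent, i.e.\ an orthogonal projector. Then
\[
(E_{\theta_r})_{u,v}=\mathbf{e}_u^{*}E_{\theta_r}\mathbf{e}_v=(E_{\theta_r}\mathbf{e}_u)^{*}(E_{\theta_r}\mathbf{e}_v).
\]
If $\theta_r\notin\Phi_u\cap\Phi_v$, one of the factors $E_{\theta_r}\mathbf{e}_u$, $E_{\theta_r}\mathbf{e}_v$ is zero, so the entry vanishes automatically. Hence the condition "$(E_{\theta_r})_{u,v}=0$ for all $r$" reduces to the condition only for $\theta_r\in\Phi_u\cap\Phi_v$. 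That condition holds vacuously when the intersection is empty, which gives exactly the stated disjunction.
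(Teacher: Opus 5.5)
Your proposal is correct, and your first route is essentially the paper's proof. The paper differentiates $\sum_{r} e^{-\ii t\theta_r}(E_{\theta_r})_{u,v}=0$ at $t=0$ to get the moment equations $\sum_r \theta_r^k (E_{\theta_r})_{u,v}=0$ for $k=0,\dots,d$, and then applies the same Vandermonde argument; this is exactly what you obtain by combining \Cref{thm:3} with $H^k=\sum_r\theta_r^kE_{\theta_r}$.

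Your second route is a valid alternative: writing $E_{\theta_j}$ as the Lagrange polynomial in $H$ from \Cref{lem:1} avoids the Vandermonde inversion. Your identity $(E_{\theta_r})_{u,v}=(E_{\theta_r}\mathbf{e}_u)^*(E_{\theta_r}\mathbf{e}_v)$ also supplies the justification that the paper only asserts for the eigenvalue-support reformulation.
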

\begin{proof}
	If $(E_{\theta_r})_{u,v}=0$ for every $r$, then~\Cref{eq::2} gives $U_H(t)_{u,v}=0$ for all $t$, so zero transfer occurs.

	Conversely, suppose that zero transfer occurs. By~\Cref{eq::2},
	\[
		\sum_{r=0}^{d} e^{-\ii t\theta_r}(E_{\theta_r})_{u,v}=0
		\qquad\text{for all }t\in\mathbb R.
	\]
	Let $x_r=(E_{\theta_r})_{u,v}$. Differentiating this identity $k=0,1,\ldots,d$ times and setting $t=0$ gives
	\[
		\sum_{r=0}^{d}\theta_r^k x_r=0,\qquad k=0,1,\ldots,d.
	\]
	The coefficient matrix is a Vandermonde matrix in the distinct eigenvalues $\theta_0,\ldots,\theta_d$, and hence is nonsingular. Therefore $x_r=0$ for every $r$. The equivalent formulation using $\Phi_u\cap\Phi_v$ follows because $(E_{\theta_r})_{u,v}=0$ whenever $\theta_r$ is not in the support of at least one of $u$ and $v$.
\end{proof}

Thus zero transfer can occur either because the relevant eigenvalue supports do not meet, as happens for vertices in different connected components, or because every common spectral idempotent has zero $(u,v)$-entry.

Let $\phi(G,x)$ denote the characteristic polynomial of the Hermitian adjacency matrix of $G$. For vertices $a,b\in V(G)$, let $G\setminus a$, $G\setminus b$, and $G\setminus ab$ be the subgraphs obtained by deleting the vertices $a$, $b$, and $\{a,b\}$, respectively.
Let $W_{ab}(G,x)$ be the \textbf{walk generating function} of $G$, which counts weighted walks from $a$ to $b$, where the weight of a walk is the product of the weights of its edges~\cite{G93}. Then
\begin{equation}\label{eq::a}
	x^{-1}W_{ab}(G,x^{-1}) \;=\; \frac{\big(\phi(G\setminus a, x)\,\phi(G\setminus b, x)\;-\;\phi(G\setminus ab, x)\,\phi(G, x)\big)^{1/2}}{\phi(G, x)} .
\end{equation}

According to Coutinho and Godsil~\cite[pp.~67, 76]{CG21b}, we have
\begin{equation}\label{eq::b}
	(E_r)_{a,b}
	\;=\;
	\sum_{P\in \mathcal{P}} \mathrm{wt}(P)\,
	\frac{\phi(G\setminus P, x)(x-\theta_r)}{\phi(G, x)}\Bigg|_{x=\theta_r}
	\;=\;
	x^{-1}W_{ab}(G,x^{-1})(x-\theta_r)\Big|_{x=\theta_r},
\end{equation}
where $\mathcal{P}$ denotes the set of paths from $a$ to $b$, and $\mathrm{wt}(P)$ denotes the weight of the path $P$. By~\Cref{thm:2},~\Cref{eq::a}, and \Cref{eq::b}, we have the following proposition.

\begin{prop}\label{prop:zerotransfer}
	Let $a,b\in V(G)$. Then there is zero transfer between vertices $a$ and $b$ if and only if
	\[
		\phi(G\setminus a, x)\,\phi(G\setminus b, x)
		\;=\;
		\phi(G\setminus ab, x)\,\phi(G, x).
	\]
\end{prop}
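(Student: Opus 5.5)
The plan is to reduce the identity to the walk generating function and then use \Cref{thm:2}. Write $W_{ab}(G,x)=\sum_{k\ge0}(H^k)_{a,b}x^k$. Since $(H^k)_{a,b}=\sum_r\theta_r^k(E_{\theta_r})_{a,b}$, summing the geometric series gives
\[
x^{-1}W_{ab}(G,x^{-1})=\sum_{k\ge0}(H^k)_{a,b}x^{-k-1}=\big((xI-H)^{-1}\big)_{a,b}=\sum_{r}\frac{(E_{\theta_r})_{a,b}}{x-\theta_r}.
\]
So the walk generating function is the $(a,b)$-entry of the resolvent, a rational function whose partial fraction coefficients are exactly the numbers $(E_{\theta_r})_{a,b}$. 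This is also the content of \Cref{eq::b}.

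First I would show that zero transfer is equivalent to $\big((xI-H)^{-1}\big)_{a,b}\equiv 0$. If zero transfer occurs, \Cref{thm:2} gives $(E_{\theta_r})_{a,b}=0$ for all $r$, so every term of the partial fraction sum vanishes. Conversely, if the rational function vanishes identically, then each residue $(E_{\theta_r})_{a,b}$ vanishes, because the poles $\theta_r$ are distinct. Applying \Cref{thm:2} again gives zero transfer. One could equally argue through \Cref{thm:3}, since the Laurent coefficients at infinity are the numbers $(H^k)_{a,b}$.

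Second, I would express the resolvent entry through characteristic polynomials. This is the main step, and I would not simply take the square-root formula \Cref{eq::a} on faith; I would justify it via Jacobi's identity. For $M=xI-H$ with principal index set $\{a,b\}$, Jacobi's complementary minor theorem gives
\[
\det\begin{pmatrix}(M^{-1})_{aa}&(M^{-1})_{ab}\\(M^{-1})_{ba}&(M^{-1})_{bb}\end{pmatrix}=\frac{\phi(G\setminus ab,x)}{\phi(G,x)}.
\]
We also have $(M^{-1})_{aa}=\phi(G\setminus a,x)/\phi(G,x)$, and similarly for $b$. Because $H$ is Hermitian, for real $x$ we get $(M^{-1})_{ba}=\overline{(M^{-1})_{ab}}$. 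Hence, for real $x$ away from the eigenvalues,
\[
\big|(M^{-1})_{ab}\big|^2=\frac{\phi(G\setminus a,x)\phi(G\setminus b,x)-\phi(G\setminus ab,x)\phi(G,x)}{\phi(G,x)^2}.
\]
This is exactly \Cref{eq::a}, with the square root read as a modulus.

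Finally, I would conclude. The rational function $(M^{-1})_{ab}$ vanishes identically if and only if it vanishes at every real $x$ off the spectrum. By the displayed identity, that happens if and only if the polynomial $\phi(G\setminus a,x)\phi(G\setminus b,x)-\phi(G\setminus ab,x)\phi(G,x)$ vanishes on infinitely many reals, that is, identically. Combined with the first step, this proves the proposition.

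The delicate point is the Hermitian conjugation step. A naive treatment of \Cref{eq::a} would lose information about complex phases. Using the modulus on the real line avoids that problem, because a complex rational function vanishes identically precisely when its modulus vanishes on a set with an accumulation point.
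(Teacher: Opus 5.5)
Your proposal is correct and follows the same chain as the paper: Theorem~\ref{thm:2} reduces zero transfer to the vanishing of every $(E_{\theta_r})_{a,b}$, these numbers are the residues of the walk generating function (the content of \Cref{eq::b}), and \Cref{eq::a} turns that vanishing into the polynomial identity. The one difference is that you prove \Cref{eq::a} yourself, via Jacobi's complementary-minor identity, and read its square root as $\lvert((xI-H)^{-1})_{a,b}\rvert$ for real $x$; this reading is what makes the formula, and hence the paper's short proof, rigorous when $H$ has non-real entries.
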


\begin{proof}
	By~\Cref{thm:2}, zero transfer between $a$ and $b$ is equivalent to $(E_r)_{a,b}=0$ for every spectral idempotent. In view of~\Cref{eq::b}, this is equivalent to the vanishing of the walk generating function $W_{ab}$. Formula~\Cref{eq::a} then gives the stated polynomial identity, and the converse follows by reversing the same implications.
\end{proof}

\begin{lem}{\bf (Perron--Frobenius Theorem)}\label{thm:1}
	If $A$ is an irreducible nonnegative matrix of order $n$ with $n>2$, then the following statements hold.
	\begin{enumerate}[label = \bf(\roman*)]
		\item $\rho(A) > 0$, and $\rho(A)$ is a simple eigenvalue of $A$.
		\item $A$ has a positive eigenvector corresponding to $\rho(A)$.
		\item All nonnegative eigenvectors of $A$ correspond to the eigenvalue $\rho(A)$.
	\end{enumerate}
\end{lem}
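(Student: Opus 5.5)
This is the classical Perron--Frobenius theorem for irreducible nonnegative matrices; the paper cites it rather than proves it. I would follow the standard route via the Collatz--Wielandt function. Throughout, use that irreducibility gives $(I+A)^{n-1}>0$ entrywise, since the digraph of $A$ is strongly connected and so every pair of indices is joined by a path of length at most $n-1$.

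\emph{Step 1.} Define
\[
r(x)=\min_{i:\,x_i>0}\frac{(Ax)_i}{x_i}
\]
for nonnegative nonzero $x$. Set $r=\sup_x r(x)$, and restrict to the compact set $\{(I+A)^{n-1}x: x\ge 0,\ \|x\|_1=1\}$, on which $r$ is continuous. This shows the supremum is attained at some $z\ge 0$, and $r>0$ because $A$ has no zero row.

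\emph{Step 2.} Show $Az=rz$. If $Az-rz\ge0$ were nonzero, then multiplying by $(I+A)^{n-1}>0$ gives a vector $w$ with $Aw>rw$ strictly, hence $r(w)>r$, a contradiction. Then $z=(1+r)^{-(n-1)}(I+A)^{n-1}z>0$, which gives \textbf{(ii)}.

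\emph{Step 3.} Show $r=\rho(A)$. For $Ay=\lambda y$, the triangle inequality gives $A|y|\ge|\lambda||y|$, so $|\lambda|\le r(|y|)\le r$.

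\emph{Step 4.} Prove simplicity, part \textbf{(i)}; I expect this to be the main obstacle.
\begin{itemize}
\item \emph{Geometric multiplicity one.} If $y$ is another real eigenvector for $r$, choose $c$ so that $z-cy\ge0$ has a zero entry. Since $z-cy$ is also an eigenvector for $r$, Step 2 forces it to be either $0$ or strictly positive. It has a zero entry, so it is $0$. Complex eigenvectors reduce to this case via their real and imaginary parts.
\item \emph{Algebraic multiplicity one.} Apply the same argument to $A^T$ to get a positive left eigenvector $w$. If there were a Jordan chain $(A-rI)x=z$, then multiplying by $w^T$ gives $0=w^Tz>0$, a contradiction.
\end{itemize}

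\emph{Step 5.} Prove \textbf{(iii)}. If $y\ge0$ is nonzero with $Ay=\mu y$, then
\[
\mu\, w^Ty=w^TAy=r\,w^Ty,
\]
and $w^Ty>0$, so $\mu=r$.

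The hypothesis $n>2$ is not needed in this argument.
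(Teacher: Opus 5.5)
Your proof is correct. It is the standard Wielandt argument via the Collatz--Wielandt function, and each step goes through: attainment of the supremum, $Az=rz$ with $z>0$, $r=\rho(A)$, geometric and then algebraic simplicity via a positive left eigenvector, and part \textbf{(iii)} via $w^{\top}y>0$.

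The paper gives no proof. It quotes Perron--Frobenius as a classical black box and uses it only in Theorem~\ref{thm:nonnegative}, applied to an irreducible block of a nonnegative Hermitian (hence real symmetric) matrix. So the comparison is between a citation and your self-contained proof.

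What your route buys is a sharper hypothesis. Your argument works for $n\ge 2$. It does use this (or at least $A\ne 0$) through the claim that $A$ has no zero row, so ``$n>2$ is not needed'' should read ``$n\ge 2$ suffices'': under some conventions $A=[0]$ counts as irreducible and violates $\rho(A)>0$. The weakening actually matters for the paper. In Theorem~\ref{thm:nonnegative} the component containing $u$ and $v$ may have order $2$ (a single edge), and the stated hypothesis $n>2$ does not cover that case.

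One step in your Step 1 should be spelled out: restricting to the compact set $\{(I+A)^{n-1}x\}$ does not lower the supremum. This holds because $(I+A)^{n-1}\ge 0$ commutes with $A$. So $Ax\ge r(x)x$ gives $A(I+A)^{n-1}x\ge r(x)(I+A)^{n-1}x$, that is, $r\bigl((I+A)^{n-1}x\bigr)\ge r(x)$.

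Finally, since the paper only uses the symmetric case, a shorter proof would suffice there. A maximizer $x$ of $x^{\top}Ax$ on the unit sphere can be replaced by $|x|$, which is then an eigenvector, and positive by your $(I+A)^{n-1}$ trick. Hence every top eigenvector has no zero entry, which forces the top eigenspace to be one-dimensional. Diagonalizability then makes your Jordan-chain step unnecessary.
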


The Perron--Frobenius theorem states that any nonnegative irreducible matrix has a unique largest eigenvalue with a strictly positive eigenvector. For quantum walks, this implies that a connected undirected graph, whose adjacency matrix is nonnegative and irreducible, cannot have zero entries in its Perron spectral idempotent. Thus, by~\Cref{thm:2}, zero transfer is impossible in connected undirected graphs.

\begin{thm}\label{thm:nonnegative}
	Let $\Gamma$ be a graph represented by a nonnegative Hermitian matrix $H$. Then zero transfer occurs between two distinct vertices $u$ and $v$ if and only if $u$ and $v$ lie in different connected components. In particular, a connected undirected graph has no zero transfer.
\end{thm}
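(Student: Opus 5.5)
We prove both directions using Lemma~\ref{thm:3}, which reduces zero transfer to the vanishing of $(H^k)_{u,v}$ for all $k\geq 1$. Throughout, ``connected components'' means the components of the underlying graph on $V$ in which $u\sim w$ whenever $h_{uw}\neq 0$. Since $H$ is Hermitian and nonnegative, it is real symmetric, so this relation is symmetric.

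\emph{Sufficiency.} Suppose $u$ and $v$ lie in different components. Every nonzero term of $(H^k)_{u,v}=\sum h_{u w_1}h_{w_1w_2}\cdots h_{w_{k-1}v}$ corresponds to a walk from $u$ to $v$ along edges with nonzero weight. No such walk exists, so $(H^k)_{u,v}=0$ for all $k\geq1$, and Lemma~\ref{thm:3} gives zero transfer. Alternatively, after a simultaneous permutation of rows and columns $H$ is block diagonal, so every power of $H$, and hence $U(t)$, is block diagonal.

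\emph{Necessity.} Suppose $u$ and $v$ lie in the same component. Then there is a path $u=w_0,w_1,\dots,w_k=v$ with $k\geq1$ and $h_{w_{i-1}w_i}>0$ for every $i$. Expand $(H^k)_{u,v}$ as the sum over all walks of length $k$ from $u$ to $v$ of the product of entry weights. Every summand is a product of nonnegative reals, so it is $\geq 0$, and the summand for this path is strictly positive. Hence $(H^k)_{u,v}>0$, and by Lemma~\ref{thm:3} there is no zero transfer.

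The key point is the \emph{absence of cancellation} among walk weights, which is exactly where nonnegativity is used. By contrast, weights in $\{\pm1,\pm\ii\}$ can cancel, as the earlier proposition on walk counts indicates. For comparison, one could instead argue spectrally via the Perron--Frobenius theorem (Lemma~\ref{thm:1}) applied to the irreducible block of the component. The positive Perron vector $\mathbf{z}$ of a simple eigenvalue gives $(E_{\rho})_{u,v}=z_uz_v/\|\mathbf{z}\|^2>0$, and Theorem~\ref{thm:2} then rules out zero transfer. However, this route needs the caveats of Lemma~\ref{thm:1}, namely order $n>2$ and the block being irreducible, and it also requires checking that $\rho$ of the block is not cancelled by an equal eigenvalue from another block. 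Other components do share the eigenvalue, but their contributions to $E_\rho$ vanish on the $(u,v)$ entry. The walk-counting argument avoids all of these issues, so we use it.

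The final claim follows immediately: a connected undirected graph has $H=A_\Gamma\geq 0$ and a single component, so it admits no zero transfer.
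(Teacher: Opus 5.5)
Your proof is correct, but the necessity direction takes a different route from the paper's.

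\textbf{The paper's route.} The paper argues spectrally. It applies the Perron--Frobenius theorem (Lemma~\ref{thm:1}) to the irreducible block of the component containing $u$ and $v$. It writes the Perron idempotent as $E_\rho=\mathbf{x}\mathbf{x}^{*}/\mathbf{x}^{*}\mathbf{x}$ with $\mathbf{x}$ entrywise positive, so $(E_\rho)_{u,v}>0$, and then invokes Theorem~\ref{thm:2}. Its sufficiency argument is the block-diagonal observation you also give.

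\textbf{Your route.} You stay at the level of Lemma~\ref{thm:3}. You note that $(H^k)_{u,v}$, for $k$ the length of a $u$--$v$ path, is a sum of nonnegative walk weights that contains a strictly positive term.

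\textbf{What each buys.} Your argument is more elementary and self-contained:
\begin{itemize}
\item It uses connectivity only through the existence of one path.
\item It needs no spectral decomposition and no Perron--Frobenius.
\item It needs no justification for transplanting the block's Perron idempotent into the idempotent of $H$ when another component also has $\rho$ as an eigenvalue. The paper passes over this step silently; it is harmless for exactly the reason you give.
\item If you take a shortest path, you also locate the leading Taylor term of $U(t)_{u,v}$. So the amplitude is nonzero for all sufficiently small $t>0$.
\end{itemize}
In return, the paper's route produces an explicit spectral witness, consistent with the idempotent criterion used in the rest of the paper. That witness carries long-time information your argument does not give. For example, the long-time average of $|U(t)_{u,v}|^2$, which equals $\sum_r |(E_{\theta_r})_{u,v}|^2$, is at least $(E_\rho)_{u,v}^2>0$.

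\textbf{Corrections to your comparison.} First, the hypothesis $n>2$ in Lemma~\ref{thm:1} is only an artifact of how the paper states it. Perron--Frobenius holds for every irreducible nonnegative matrix of order at least $2$, and a two-vertex component makes $u$ and $v$ adjacent anyway. Second, other components only \emph{may} share the eigenvalue $\rho$; they need not.
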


\begin{proof}
	If $u$ and $v$ lie in different connected components, then $H$ is block diagonal with respect to the component decomposition, and so is $U(t)=\exp(-\ii tH)$. Hence $U(t)_{u,v}=0$ for all $t$.

	Conversely, suppose that $u$ and $v$ lie in the same connected component. The corresponding block of $H$ is a nonnegative irreducible matrix. By the Perron--Frobenius theorem, its spectral radius $\rho$ is a simple eigenvalue with a strictly positive eigenvector $\mathbf{x}$. The spectral idempotent for $\rho$ is
	\[
		E_{\rho}=\frac{\mathbf{x}\mathbf{x}^{*}}{\mathbf{x}^{*}\mathbf{x}},
	\]
	and therefore $(E_\rho)_{u,v}>0$. By~\Cref{thm:2}, zero transfer cannot occur between $u$ and $v$.
\end{proof}

This result is related to the concept of \textbf{ergodicity} in \textbf{Markov chains}, where irreducibility ensures that all states are accessible. The nonnegative case is therefore rigid: zero transfer can occur only for vertices separated by components. In contrast, signed, oriented, and mixed graphs need not satisfy the nonnegativity condition in the Perron--Frobenius theorem. For example, Coutinho and Godsil~\cite{CG21b} showed that the signed cycle $C_4^-$ admits zero transfer, and extended the construction to other graphs using 1-sums (see~\cite[Section~4.10]{CG21b}).

Oriented graphs fall outside this nonnegative case because their Hermitian adjacency matrices have entries in $\{0,\ii,-\ii\}$. Phase interference among walks may therefore force entries of spectral idempotents to vanish. Motivated by this phenomenon and by the symmetry of cyclic structures, we focus on oriented circulant graphs in the following section.

\section{Zero transfer on oriented circulant graphs}\label{sec::3}

In this section, we investigate zero transfer on oriented circulant graphs. Throughout the section, all oriented circulant graphs under consideration are assumed to be connected.

Recall that a circulant graph is a Cayley graph over a cyclic group. Let $\mathbb{Z}_n$ be the additive group of integers modulo $n$, and let $\mathcal{C}$ be a subset of $\mathbb{Z}_n\setminus \{0\}$. The \textbf{circulant graph} $G(\mathbb{Z}_n,\mathcal{C})$ is defined to have vertex set $\mathbb{Z}_n$ and arc set
$$
	A=\{(a,b): b-a \in \mathcal{C},\; a,b\in \mathbb{Z}_n\}.
$$
The set $\mathcal{C}$ is called the \textbf{connection set} of $G(\mathbb{Z}_n,\mathcal{C})$. In particular, if $\mathcal{C}=\mathcal{C}^{-1}$ then $G(\mathbb{Z}_n,\mathcal{C})$ is \textbf{undirected circulant graphs}, while if $\mathcal{C}\cap \mathcal{C}^{-1}=\emptyset$ then $G(\mathbb{Z}_n,\mathcal{C})$ is \textbf{oriented circulant graphs}.

Let $\Gamma=G(\mathbb{Z}_n,\mathcal{C})$ be an oriented circulant graph, and let $H$ be the  Hermitian adjacency matrix of $\Gamma$. According to~\cite{MB21a}, the eigenvalues and their corresponding eigenvectors of $H$ are given by
\begin{equation}\label{eq::1}
	\mu_j=\ii \sum_{k \in \mathcal{C}} (\omega^{jk}_n-\omega^{-jk}_n),
	\quad \mathbf{v}_j=[1 \ \omega_n^j \ \omega_n^{2j} \cdots
			\omega_n^{(n-1)j}]^{\top},
\end{equation}
for $j=0,1,\ldots,n-1 $, where $\omega_n=\exp(2\pi\ii /n)$ is a primitive $n$-th root of unity. Using $\omega_n^{jk}-\omega_n^{-jk}=2\ii \sin\left(\frac{2\pi jk}{n}\right)$, we obtain
\begin{equation}\label{eq::22}
	\mu_j=-2 \sum_{k \in \mathcal{C}} \sin\left(\frac{2\pi j k}{n}\right),
\end{equation}
for $j=0,1,\ldots,n-1 $.

The vectors $\mathbf{u}_k=\mathbf{v}_k/\sqrt{n}$ form an orthonormal eigenbasis. Hence the transition matrix $U(t)$ of $H$ can be expressed as
\begin{equation}\label{eq::4}
	U(t)=\frac1n \sum_{r=0}^{n-1} \exp(-\ii \mu_r t) \mathbf{v}_r \mathbf{v}_r^*.
\end{equation}
In particular, by \Cref{eq::1} and \cref{eq::4}, for vertices $u,v \in \mathbb{Z}_n$, we have
\begin{equation}\label{eq:7}
	U(t)_{u,v}=\frac1n \sum_{r=0}^{n-1} \exp(-\ii \mu_r t)
	\omega_n^{r(u-v)}.
\end{equation}

Let $\theta_1, \theta_2, \ldots, \theta_d$ be the distinct eigenvalues of $\Gamma$, with multiplicities $k_1, k_2, \ldots, k_d$, respectively. For each $1 \le i \le d$, let
\[
	M_{\theta_i} = \{j \in \mathbb{Z}_n \mid \mu_j = \theta_i\}
\]
be the index set corresponding to the eigenvalue $\theta_i$. It follows that $|M_{\theta_i}|=k_i$. By~\Cref{eq:7}, we have

\begin{equation*}
	U(t)_{u,v}=\sum_{i=1}^{d}\exp(-\ii t \theta_i)(E_{\theta_{i}})_{u,v}=\sum_{i=1}^{d}\exp(-\ii t \theta_i)\left(\frac1n \sum_{r\in M_{\theta_i}} \omega_n^{r(u-v)}\right).
\end{equation*}
Hence, the spectral idempotent corresponding to the eigenvalue $\theta_i$ satisfies
\begin{equation}\label{eq::6}
	(E_{\theta_{i}})_{u,v}=\frac1n \sum_{r\in M_{\theta_i}} \omega_n^{r(u-v)}.
\end{equation}

\begin{lem}\label{lem::4}
	Let $\Gamma = G(\mathbb{Z}_n,\mathcal{C})$ be an oriented circulant graph. Then the eigenvalue support $\Phi_u$ of a vertex $u$ contains all eigenvalues.
\end{lem}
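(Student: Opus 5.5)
To prove \Cref{lem::4}, I will show directly that $E_{\theta_i}\mathbf{e}_u\neq\mathbf{0}$ for every distinct eigenvalue $\theta_i$ of $H$ and every vertex $u\in\mathbb{Z}_n$. It suffices to exhibit a single nonzero entry of $E_{\theta_i}\mathbf{e}_u$, and the natural choice is the diagonal entry $(E_{\theta_i})_{u,u}$. Formula \Cref{eq::6} computes it explicitly.

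\textbf{Step 1: the diagonal entry.} Setting $v=u$ in \Cref{eq::6} makes every root of unity $\omega_n^{r(u-u)}$ equal to $1$, so
\[
(E_{\theta_i})_{u,u}=\frac1n\sum_{r\in M_{\theta_i}}1=\frac{|M_{\theta_i}|}{n}=\frac{k_i}{n}.
\]
Since $\theta_i$ is an eigenvalue of $H$, some index $j$ satisfies $\mu_j=\theta_i$. Hence $M_{\theta_i}\neq\emptyset$ and $k_i\ge 1$, so $(E_{\theta_i})_{u,u}=k_i/n>0$.

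\textbf{Step 2: conclusion.} The $u$-th coordinate of $E_{\theta_i}\mathbf{e}_u$ is $(E_{\theta_i})_{u,u}$, which is nonzero by Step 1. Therefore $E_{\theta_i}\mathbf{e}_u\neq\mathbf{0}$, and $\theta_i\in\Phi_u$ by the definition of eigenvalue support. Since $i$ and $u$ were arbitrary, $\Phi_u$ contains every eigenvalue, for every vertex $u$.

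There is an alternative route that avoids \Cref{eq::6}. The orthonormal eigenvectors $\mathbf{u}_r=\mathbf{v}_r/\sqrt n$ have entries of modulus $1/\sqrt n$, so $E_{\theta_i}\mathbf{e}_u=\sum_{r\in M_{\theta_i}}\overline{(\mathbf{u}_r)_u}\,\mathbf{u}_r$. This is a linear combination of linearly independent vectors with coefficients that are all nonzero, hence it is a nonzero vector. No step here is a genuine obstacle. The only point needing care is that $E_{\theta_i}=\frac1n\sum_{r\in M_{\theta_i}}\mathbf{v}_r\mathbf{v}_r^*$ really is the eigenprojector for $\theta_i$, even when different $r$ give the same value $\mu_r$. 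This holds because the $\mathbf{v}_r$ are pairwise orthogonal, and \Cref{eq::6} already records exactly this fact.
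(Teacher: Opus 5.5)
Your proof is correct and takes essentially the same approach as the paper: both rest on computing $(E_{\theta_i})_{u,u}=k_i/n\neq 0$ from \Cref{eq::6}. The paper argues by contradiction, using $(E_{\theta_i})_{u,u}=\|E_{\theta_i}\mathbf{e}_u\|^2$, whereas you read off the $u$-th coordinate of $E_{\theta_i}\mathbf{e}_u$ directly; this is equivalent and slightly more direct.
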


\begin{proof}
	Suppose that $\theta_i\notin \Phi_u$. Then $E_{\theta_{i}} \mathbf{e}_u=\mathbf{0}$, and hence
	\[
		(E_{\theta_{i}})_{u,u}=\langle E_{\theta_{i}}\mathbf{e}_u,E_{\theta_{i}}\mathbf{e}_u\rangle=\|E_{\theta_{i}}\mathbf{e}_u\|^2=0.
	\]
	However,
	\[
		(E_{\theta_{i}})_{u,u}=\frac1n \sum_{r\in M_{\theta_i}}\omega_n^{r(u-u)}=\frac{|M_{\theta_i}|}{n}=\frac{k_i}{n}\neq0,
	\]
	which is a contradiction. Therefore $\theta_i\in\Phi_u$, and the proof is complete.
\end{proof}

By~\Cref{lem::4}, the eigenvalue support of each vertex contains all eigenvalues. Based on~\Cref{thm:2} and \Cref{eq::6}, we obtain a criterion for determining whether zero transfer occurs in oriented circulant graphs, which is stated in the following lemma.

\begin{lem}\label{lem:zero-criterion}
	Let $\Gamma = G(\mathbb{Z}_n, \mathcal{C})$ be an oriented circulant graph, and let $u$ and $v$ be two distinct vertices. Then there is zero transfer between vertices $u$ and $v$ if and only if
	\[
		(E_{\theta_{i}})_{u,v}=\frac1n \sum_{r\in M_{\theta_i}} \omega_n^{r(u-v)} = 0
	\]
	for every eigenvalue $\theta_i$.
\end{lem}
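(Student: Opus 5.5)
The plan is to combine \Cref{thm:2} with the explicit formula \Cref{eq::6} for the entries of the spectral idempotents of an oriented circulant graph. By \Cref{thm:2}, zero transfer between the distinct vertices $u$ and $v$ is equivalent to $(E_{\theta_i})_{u,v}=0$ for every distinct eigenvalue $\theta_i$ of $H$. So the task reduces to checking that the quantity appearing in the statement really is $(E_{\theta_i})_{u,v}$.

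To verify \Cref{eq::6}, I would start from the orthonormal eigenbasis $\mathbf{u}_r=\mathbf{v}_r/\sqrt n$ given by \Cref{eq::1}. The spectral idempotent for $\theta_i$ is the orthogonal projector onto its eigenspace, which is spanned by $\{\mathbf{u}_r : r\in M_{\theta_i}\}$. Hence
\[
E_{\theta_i}=\frac1n\sum_{r\in M_{\theta_i}}\mathbf{v}_r\mathbf{v}_r^{*}.
\]
Reading off the $(u,v)$-entry gives $\frac1n\sum_{r\in M_{\theta_i}}\omega_n^{ru}\overline{\omega_n^{rv}}=\frac1n\sum_{r\in M_{\theta_i}}\omega_n^{r(u-v)}$, which is exactly the expression in the statement.

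The only point that needs care is that this identification is legitimate, namely that the index sets $M_{\theta_i}$ partition $\mathbb{Z}_n$ and collect all eigenvectors for $\theta_i$. This holds because the $n$ vectors $\mathbf{v}_r$ are orthogonal, hence linearly independent, so they form a complete eigenbasis. Grouping them by their eigenvalue $\mu_r$ therefore yields exactly the eigenspaces of $H$.

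Substituting this formula into \Cref{thm:2}(ii) gives both directions of the claim at once. \Cref{lem::4} is not strictly required, but it explains why the condition must be imposed on every eigenvalue $\theta_i$. Since every eigenvalue lies in $\Phi_u\cap\Phi_v$, the "disjoint supports" alternative of \Cref{thm:2} can never apply, and the criterion reduces exactly to the vanishing of all the sums. I do not expect any real obstacle; the argument is a direct specialization of \Cref{thm:2}.
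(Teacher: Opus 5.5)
Your proposal is correct and follows the paper's proof: both combine Theorem~\ref{thm:2} with formula~\eqref{eq::6}. You are also right that Lemma~\ref{lem::4} is not needed, since the first form of Theorem~\ref{thm:2}(ii) already quantifies over every eigenvalue, and your re-derivation of~\eqref{eq::6} from the orthogonal eigenbasis is a valid extra check of a formula the paper takes as given.
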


\begin{proof}
	By~\Cref{lem::4}, every eigenvalue belongs to the eigenvalue support of every vertex. Therefore~\Cref{thm:2} says that zero transfer between $u$ and $v$ occurs if and only if $(E_{\theta_i})_{u,v}=0$ for every distinct eigenvalue $\theta_i$. Formula~\Cref{eq::6} gives the stated condition.
\end{proof}

Suppose that there exists an eigenvalue of multiplicity one. Then we obtain the following lemma.

\begin{lem}\label{lem::6}
	Let $\Gamma = G(\mathbb{Z}_n, \mathcal{C})$ be an oriented circulant graph. If there exists an eigenvalue of $\Gamma$ with multiplicity one, then $\Gamma$ has no zero transfer.
\end{lem}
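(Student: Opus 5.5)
We argue by contradiction using the criterion of \Cref{lem:zero-criterion}. Assume $\theta_i$ is an eigenvalue of $\Gamma$ with multiplicity one, so $M_{\theta_i}=\{r_0\}$ for a single index $r_0\in\mathbb{Z}_n$. Suppose there were zero transfer between two distinct vertices $u$ and $v$. Then \Cref{lem:zero-criterion} forces $(E_{\theta_i})_{u,v}=0$ for this particular eigenvalue as well. By \Cref{eq::6} this entry has the explicit form
\[
	(E_{\theta_i})_{u,v}=\frac1n\,\omega_n^{r_0(u-v)},
\]
and this is a root of unity divided by $n$, so its modulus is exactly $1/n\neq 0$. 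This contradiction shows that no pair of distinct vertices can exhibit zero transfer.

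The steps are short, and I would carry them out in this order:
\begin{enumerate}[label=\bf(\arabic*)]
	\item Identify the singleton index set $M_{\theta_i}$.
	\item Invoke \Cref{lem:zero-criterion}, which is available because \Cref{lem::4} guarantees every eigenvalue lies in every support.
	\item Evaluate the single-term sum from \Cref{eq::6} and note that it is nonzero.
\end{enumerate}
Alternatively, one can bypass \Cref{lem:zero-criterion} and apply \Cref{thm:2} directly, since that theorem requires $(E_\theta)_{u,v}=0$ for every eigenvalue.

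There is no real obstacle here. The only point to check is that \Cref{eq::6} really describes the spectral idempotent for the \emph{distinct} eigenvalue $\theta_i$, that is, that we are summing over the full index set $M_{\theta_i}$ rather than over a single eigenvector. Multiplicity one makes this set a single index $r_0$, so the sum has one term. A unimodular number cannot cancel with nothing, so the entry cannot vanish.

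It is worth remarking, as a sanity check, that $\mu_0=0$ always holds by \Cref{eq::22}. Consequently, whenever $0$ is a simple eigenvalue, the lemma applies immediately.
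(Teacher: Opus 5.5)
Your proposal is correct and follows the paper's proof: the paper likewise takes the singleton index set $M_{\theta_i}=\{r\}$ and observes that $(E_{\theta_i})_{u,v}=\frac1n\omega_n^{r(u-v)}\neq 0$ for every pair $u,v$. This contradicts the vanishing condition of \Cref{lem:zero-criterion}.
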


\begin{proof}
	If $\theta_i$ is an eigenvalue of multiplicity one, then $M_{\theta_i}=\{r\}$ for some $r\in\mathbb Z_n$.  Hence, for any vertices $u,v$,
	\[
		(E_{\theta_i})_{u,v} = \frac{1}{n}\omega_n^{r(u-v)} \neq 0.
	\]
	This contradicts the condition in \Cref{lem:zero-criterion}. Hence, $\Gamma$ has no zero transfer.
\end{proof}

\begin{lem}\label{lem::5}
	Let $\Gamma = G(\mathbb{Z}_n, \mathcal{C})$ be an oriented circulant graph. If $n$ is prime, then $\Gamma$ has eigenvalue $0$ with multiplicity one.
\end{lem}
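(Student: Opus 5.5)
Since $\mu_0=-2\sum_{k\in\mathcal C}\sin 0=0$ by \Cref{eq::22}, the value $0$ is always an eigenvalue, so the whole task is to show that $\mu_j\neq0$ for every $j=1,\dots,n-1$ when $n=p$ is prime. By \Cref{eq::1}, $\mu_j=\ii\sum_{k\in\mathcal C}(\omega_p^{jk}-\omega_p^{-jk})$. Suppose $\mu_j=0$ for some $j\in\{1,\dots,p-1\}$. Then $\sum_{k\in\mathcal C}\omega_p^{jk}-\sum_{k\in\mathcal C}\omega_p^{-jk}=0$. Since $p$ is prime and $j\not\equiv 0$, $\zeta=\omega_p^{j}$ is again a primitive $p$-th root of unity, so this is a vanishing integer combination $f(\zeta)=0$ with $f(x)=\sum_{k\in\mathcal C}x^{k}-\sum_{k\in\mathcal C}x^{p-k}$, where exponents are taken in $\{1,\dots,p-1\}$.

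The key step is the structure of integer relations among $p$-th roots of unity: the minimal polynomial of $\zeta$ over $\mathbb Q$ is the cyclotomic polynomial $1+x+\cdots+x^{p-1}$, so $\zeta,\zeta^2,\dots,\zeta^{p-1}$ are linearly independent over $\mathbb Q$ (equivalently, the only rational relations among $1,\zeta,\dots,\zeta^{p-1}$ are multiples of $\sum_{i=0}^{p-1}\zeta^i=0$). Now $f$ has degree at most $p-1$ and no constant term. Writing $f(x)=\sum_{m=1}^{p-1}c_m x^m$, the condition $f(\zeta)=0$ forces $f$ to be a rational multiple of $1+x+\cdots+x^{p-1}$; since its constant coefficient is $0$, that multiple is $0$, so every $c_m=0$.

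Finally compute the coefficients: because $\Gamma$ is oriented, $\mathcal C\cap\mathcal C^{-1}=\emptyset$, i.e.\ $k\in\mathcal C$ implies $p-k\notin\mathcal C$. Hence the exponents $\{k:k\in\mathcal C\}$ and $\{p-k:k\in\mathcal C\}$ are disjoint, and $c_m=1$ for $m\in\mathcal C$, $c_m=-1$ for $m\in-\mathcal C$, and $0$ otherwise. Since $\Gamma$ is connected, $\mathcal C\neq\emptyset$, so some $c_m\neq0$, a contradiction. Thus $\mu_j\neq 0$ for $j\neq 0$ and $0$ has multiplicity exactly one.

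The only real point needing care is the linear-independence step; I would cite irreducibility of the $p$-th cyclotomic polynomial (Eisenstein after $x\mapsto x+1$) and note that the disjointness of $\mathcal C$ and $-\mathcal C$ is exactly what prevents cancellation in $f$ — this is where orientedness is used, and where the argument would fail for undirected circulants.
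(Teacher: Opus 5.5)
Your proof is correct and is essentially the paper's argument: both show that a vanishing $\pm 1$ combination of nonzero powers of a primitive $p$-th root of unity must be trivial, using irreducibility of $1+x+\cdots+x^{p-1}$ and the disjointness of $\mathcal{C}$ and $-\mathcal{C}$. The only difference is cosmetic. You evaluate at the primitive root $\zeta=\omega_p^{j}$, whereas the paper reindexes the exponents through the bijection $x\mapsto jx$. You also state explicitly that connectedness gives $\mathcal{C}\neq\emptyset$, which the paper uses only implicitly.
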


\begin{proof}
	Let $H$ be the Hermitian adjacency matrix of $\Gamma$. From Eq.~\eqref{eq::1}, the eigenvalues $\mu_j$ of $H$ are given by
	$$ \mu_j = \ii \sum_{k \in \mathcal{C}} (\omega_n^{jk} - \omega_n^{-jk}), \quad j = 0, 1, \dots, n-1, $$
	where $\omega_n = \exp(2\pi\ii/n)$.
	Clearly, for $j=0$, we have $\mu_0 = \ii \sum_{k \in \mathcal{C}} (1 - 1) = 0$. Thus, $0$ is an eigenvalue of $H$.

	To show that the multiplicity of $0$ is exactly one, we will prove that $\mu_j \neq 0$ for all $1 \le j \le n-1$. Suppose, for the sake of contradiction, that $\mu_j = 0$ for some $j \in \{1, \dots, n-1\}$. This implies
	$$ \sum_{k \in \mathcal{C}} \omega_n^{jk} - \sum_{k \in \mathcal{C}} \omega_n^{-jk} = 0. $$

	Let $S_1 = \{jk \pmod n \mid k \in \mathcal{C}\}$ and $S_2 = \{-jk \pmod n \mid k \in \mathcal{C}\}$. Since $n$ is prime and $j \not\equiv 0 \pmod n$, the mapping $x \mapsto jx \pmod n$ is a bijection on $\mathbb{Z}_n \setminus \{0\}$. Because $\Gamma$ is an oriented graph, its connection set satisfies $\mathcal{C} \cap \mathcal{C}^{-1} = \emptyset$, which ensures that $S_1 \cap S_2 = \emptyset$.

	Thus, our assumption leads to a linear relation over $\mathbb{Q}$:
	$$ \sum_{s \in S_1} \omega_n^s - \sum_{t \in S_2} \omega_n^t = 0. $$
	Because $n$ is prime, the minimal polynomial of $\omega_n$ over $\mathbb{Q}$ is the $n$-th cyclotomic polynomial $\Phi_n(x) = 1 + x + \dots + x^{n-1}$. Consequently, any $\mathbb{Q}$-linear dependence relation among the elements $1, \omega_n, \dots, \omega_n^{n-1}$ must be a constant multiple of
	$$ 1 + \omega_n + \omega_n^2 + \dots + \omega_n^{n-1} = 0. $$
	However, since $0 \notin \mathcal{C}$, it follows that $0 \notin S_1 \cup S_2$. Therefore, the coefficient of $\omega_n^0 = 1$ in our relation is $0$. This forces the constant multiple to be $0$, meaning the entire linear relation must be trivial (all coefficients must be zero).

	This contradicts the fact that $S_1$ and $S_2$ are non-empty sets with coefficients $1$ and $-1$, respectively. Therefore, $\mu_j \neq 0$ for all $j \neq 0$, and the eigenvalue $0$ has multiplicity of exactly one.
\end{proof}

Combining~\Cref{lem::6},~\Cref{lem::5}, we obtain the following theorem.

\begin{thm}\label{thm:11}
	Let $\Gamma = G(\mathbb{Z}_n, \mathcal{C})$ be an oriented circulant graph of prime order. Then there is no zero transfer in $\Gamma$.
\end{thm}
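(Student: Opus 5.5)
The plan is to chain the two preceding lemmas. Let $\Gamma = G(\mathbb{Z}_n,\mathcal{C})$ be an oriented circulant graph with $n=p$ prime.

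First, by \Cref{lem::5}, the eigenvalue $\mu_0=0$ of $H$ is simple, so its index set is $M_0=\{0\}$. Next, \Cref{lem::6} says that any eigenvalue of multiplicity one rules out zero transfer between every pair of distinct vertices. Applying it gives the theorem.

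To make this self-contained, I would also spell out the one-line computation behind it. By \Cref{eq::6}, for any distinct $u,v\in\mathbb{Z}_p$,
\[
(E_{0})_{u,v}=\frac1p\,\omega_p^{0\cdot(u-v)}=\frac1p\neq 0 .
\]
By \Cref{lem:zero-criterion}, or directly by \Cref{thm:2} together with \Cref{lem::4}, this nonvanishing entry is incompatible with zero transfer between $u$ and $v$.

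There is essentially no obstacle here, since the substantive work sits in \Cref{lem::5}. That lemma uses the irreducibility of the $p$-th cyclotomic polynomial and the disjointness $\mathcal{C}\cap\mathcal{C}^{-1}=\emptyset$ to show $\mu_j\neq 0$ for $j\neq 0$. The only care needed is to check the hypotheses of that lemma. In particular, for $p=2$ there is no oriented connection set: $\mathcal{C}\subseteq\{1\}$ and $1=-1$, so the only option is $\mathcal{C}=\emptyset$, giving a disconnected graph on two vertices. That case is excluded by the standing connectivity assumption, so we may take $p$ odd with $\mathcal{C}\neq\emptyset$, which is exactly the setting where \Cref{lem::5} applies.
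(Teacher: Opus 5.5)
Your proposal is correct and follows the paper's proof: it combines \Cref{lem::5} (for prime $n$ the eigenvalue $0$ is simple) with \Cref{lem::6} (a simple eigenvalue with $M_{\theta}=\{r\}$ gives $(E_{\theta})_{u,v}=\frac1n\omega_n^{r(u-v)}\neq 0$). Your extra remark is also sound and worth keeping: connectivity is what guarantees $\mathcal{C}\neq\emptyset$ and excludes $n=2$, where $G(\mathbb{Z}_2,\emptyset)$ does exhibit zero transfer. The paper leaves this implicit in its standing assumption.
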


\begin{proof}
	By~\Cref{lem::5}, $\Gamma$ has an eigenvalue with multiplicity one. By~\Cref{lem::6}, no zero transfer can occur in $\Gamma$.
\end{proof}

By~\Cref{thm:11}, zero transfer does not occur in connected oriented circulant graphs of prime order. However, this property does not hold for all odd orders. When $n$ is odd composite, the cyclotomic field $\mathbb{Q}(\omega_n)$ contains proper subfields, allowing sums of roots of unity over certain index sets to vanish.

Computational searches show that zero transfer also does not occur for odd composite orders $n < 21$ (i.e., $n=9$ and $15$). The following example shows that $n=21$ is the minimal odd order admitting zero transfer.

\begin{ex}\label{ex:counterexample_21}
	Consider the connected oriented circulant graph $\Gamma = G(\mathbb{Z}_{21}, \mathcal{C})$ with connection set $\mathcal{C} = \{2, 10, 12, 15, 16, 17\}$. Let $u, v \in \mathbb{Z}_{21}$ be two vertices with distance $d = u - v \equiv 7 \pmod{21}$. The $(u,v)$-entry of the spectral idempotent $E_{\theta_i}$ is given by
	$$ (E_{\theta_i})_{u,v} = \frac{1}{21} \sum_{r \in M_{\theta_i}} \omega_{21}^{7r} = \frac{1}{21} \sum_{r \in M_{\theta_i}} \omega_3^r, $$
	where $\omega_3 = e^{2\pi i/3}$ and $M_{\theta_i}$ is the index set associated with the eigenvalue $\theta_i$.

	The eigenvalues of $\Gamma$ partition $\mathbb{Z}_{21}$ into exactly seven index sets:
	$M_{\theta_1} = \{0, 7, 14\}$,
	$M_{\theta_2} = \{1, 8, 18\}$,
	$M_{\theta_3} = \{2, 15, 16\}$,
	$M_{\theta_4} = \{3, 13, 20\}$,
	$M_{\theta_5} = \{4, 9, 11\}$,
	$M_{\theta_6} = \{5, 6, 19\}$, and
	$M_{\theta_7} = \{10, 12, 17\}$.

	Observe that for every $1 \le i \le 7$, $M_{\theta_i} \equiv \{0, 1, 2\} \pmod 3$. Since $\omega_3^r$ depends only on $r \pmod 3$, the sum over each index set evaluates to $1 + \omega_3 + \omega_3^2 = 0$. Consequently, $(E_{\theta_i})_{u,v} = 0$ for all $i$. Therefore, zero transfer occurs between $u$ and $v$.
\end{ex}

As shown in \Cref{ex:counterexample_21}, zero transfer is closely related to index sets forming complete residue systems modulo $3$. In this family, the simultaneous vanishing of the spectral idempotents occurs only at distances that are nonzero multiples of $n/3$.

For $n=21$, a computational search shows that there exist exactly 12 connection sets admitting zero transfer. For each graph in this family, zero transfer occurs precisely between vertices $u,v \in \mathbb{Z}_{21}$ satisfying $u-v \in \{7,14\}$. The corresponding 12 connection sets (including $\mathcal{C}_7$ in \Cref{ex:counterexample_21}) are listed in~\Cref{tab:connectionsets}.

\begin{table}[htbp]
	\centering
	\caption{Connection sets admitting zero transfer for $n=21$}
	\label{tab:connectionsets}

	\begin{tabular}{cccc}
		\toprule
		$i$ & $\mathcal{C}_i$       & $i$  & $\mathcal{C}_i$        \\
		\midrule
		$1$ & $\{1,3,5,8,9,19\}$    & $7$  & $\{2,10,12,15,16,17\}$ \\
		$2$ & $\{1,5,6,8,18,19\}$   & $8$  & $\{2,12,13,16,18,20\}$ \\
		$3$ & $\{1,6,8,9,10,17\}$   & $9$  & $\{3,4,5,11,15,19\}$   \\
		$4$ & $\{1,8,10,12,17,18\}$ & $10$ & $\{3,4,9,11,13,20\}$   \\
		$5$ & $\{2,3,13,15,16,20\}$ & $11$ & $\{4,5,6,9,11,19\}$    \\
		$6$ & $\{2,6,10,16,17,18\}$ & $12$ & $\{4,11,12,13,15,20\}$ \\
		\bottomrule
	\end{tabular}

\end{table}

The preceding results have focused on zero transfer in oriented circulant graphs of odd order. We now turn to the case where $n$ is even. We first consider the case in which all eigenvalues have multiplicity at least two.

\begin{thm}
	Let $\Gamma=G(\mathbb{Z}_n,\mathcal{C})$ be an oriented circulant graph. If zero transfer occurs in $\Gamma$ and $\Gamma$ has an eigenvalue with multiplicity exactly two, then $n$ must be even.
\end{thm}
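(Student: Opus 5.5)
The argument rests on \Cref{lem:zero-criterion}. Suppose zero transfer occurs between distinct vertices $u$ and $v$, and put $d=u-v\not\equiv 0 \pmod n$. Let $\theta$ be an eigenvalue of multiplicity exactly two, so that $M_\theta=\{a,b\}$ with $a\neq b$ in $\mathbb{Z}_n$. By \Cref{lem:zero-criterion},
\[
\omega_n^{ad}+\omega_n^{bd}=0,
\qquad\text{hence}\qquad
\omega_n^{(a-b)d}=-1 .
\]
The task is therefore to show that a root of unity $\omega_n^{m}$, with $m=(a-b)d$, can equal $-1$ only when $n$ is even.

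The key step is elementary. If $\omega_n^{m}=-1$, then $e^{2\pi\ii m/n}=-1$, so $2m/n$ is an odd integer. Writing $2m=n(2\ell+1)$ shows that $n$ divides $2m$ while $n$ does not divide $m$. Now suppose $n$ were odd. Then $\gcd(n,2)=1$, and $n\mid 2m$ would force $n\mid m$, which gives $\omega_n^m=1\neq -1$. This is a contradiction, so $n$ must be even.

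I expect no real obstacle. The only point needing care is that \Cref{lem:zero-criterion} applies to the full index set $M_\theta$, and this set has exactly two elements because the multiplicity is exactly two. Since $\mu_j$ is indexed by $j\in\mathbb{Z}_n$ and $|M_\theta|=k_\theta$, this holds. Note also that the argument does not need $a\neq b$ beyond having two distinct summands, nor does it need connectivity.

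As a byproduct, the argument also pins down the structure: $(a-b)d\equiv n/2 \pmod n$. This relation might be worth recording for the subsequent structural results on even order.
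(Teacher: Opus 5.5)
Your proposal is correct and follows the paper's proof essentially verbatim: both apply the zero-transfer criterion (\Cref{lem:zero-criterion}) to the two-element index set $M_\theta$ to obtain $\omega_n^{(a-b)(u-v)}=-1$, and then deduce that $n$ is even. Your $\gcd(n,2)=1$ step is a more precise version of the paper's closing remark that ``$n$ divides an even integer but not an odd integer.''
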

\begin{proof}
	Let $\theta$ be an eigenvalue with index set $M_\theta = \{r_1, r_2\}$. If zero transfer occurs between vertices $u$ and $v$, then~\Cref{lem:zero-criterion} gives
	$$(E_\theta)_{u,v} = \frac{1}{n} (\omega_n^{r_1 (u-v)}+\omega_n^{r_2 (u-v)}) = 0.$$
	Thus,
	\[
		\omega_n^{(r_1-r_2)(u-v)}=-1.
	\]
	Equivalently,
	\[
		\exp\left(\frac{2\pi \ii (r_1-r_2)(u-v)}{n}\right)=-1,
	\]
	which implies that
	\[
		\frac{2(r_1-r_2)(u-v)}{n}\in 2\mathbb Z+1.
	\]
	Hence $n$ divides an even integer but not an odd integer, and therefore $n$ must be even.
\end{proof}

For even $n$, the index $n/2$ plays a critical role in the spectrum of oriented circulant graphs. Notice that for $j = n/2$, by~\Cref{eq::22}, the eigenvalue formula gives:
$$\mu_{n/2} = -2\sum_{k \in \mathcal{C}} \sin\left(\frac{2\pi (n/2)k}{n}\right) = -2\sum_{k \in \mathcal{C}} \sin(\pi k) = 0.$$
This ensures that the index set $M_0$ corresponding to the eigenvalue $0$ always contains both $0$ and $n/2$. This leads to the following structural restriction on the distance between vertices exhibiting zero transfer.

\begin{cor}\label{cor::1}
	Let $\Gamma=G(\mathbb{Z}_n,\mathcal{C})$ be an oriented circulant graph with an even number of vertices. If there is zero transfer between $u$ and $v$ in $\Gamma$, and the eigenvalue $0$ has multiplicity exactly two, then $u-v$ is odd.
\end{cor}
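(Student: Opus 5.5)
Since $n$ is even, we have $\mu_0=0$ and $\mu_{n/2}=0$ by~\Cref{eq::22}, as computed just before the statement. Because $0$ has multiplicity exactly two, the index set of the eigenvalue $0$ is therefore exactly $M_0=\{0,n/2\}$. The plan is to apply~\Cref{lem:zero-criterion} to this single eigenvalue. Zero transfer between $u$ and $v$ forces $(E_0)_{u,v}=0$, and~\Cref{eq::6} gives
\[
	(E_0)_{u,v}=\frac1n\left(\omega_n^{0\cdot(u-v)}+\omega_n^{(n/2)(u-v)}\right)=\frac1n\left(1+(-1)^{u-v}\right),
\]
using $\omega_n^{n/2}=-1$.

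Next we observe that $(-1)^{u-v}$ is well defined on $\mathbb{Z}_n$. Changing the representative of $u-v$ by a multiple of $n$ changes it by an even integer, since $n$ is even. So the parity of $u-v$ makes sense.

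The vanishing condition then becomes $1+(-1)^{u-v}=0$, that is, $(-1)^{u-v}=-1$. This says exactly that $u-v$ is odd, which finishes the argument.

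There is no substantial obstacle. The only point needing care is to justify that $M_0$ is \emph{exactly} $\{0,n/2\}$. This follows because $0$ and $n/2$ are distinct indices when $n\ge 2$, both lie in $M_0$, and $|M_0|=2$ by the multiplicity hypothesis. Alternatively, one can simply invoke the computation in the proof of the preceding theorem with $r_1=0$ and $r_2=n/2$: there $\omega_n^{(r_1-r_2)(u-v)}=-1$ becomes $(-1)^{u-v}=-1$.
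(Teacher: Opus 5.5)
Your proposal is correct and follows the paper's proof essentially verbatim. Both identify $M_0=\{0,n/2\}$ from $\mu_0=\mu_{n/2}=0$ and the multiplicity hypothesis, then use the zero-transfer criterion to force $(E_0)_{u,v}=\frac1n\left(1+(-1)^{u-v}\right)=0$. Your remark that the parity of $u-v$ is well defined because $n$ is even is a small point the paper leaves implicit.
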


\begin{proof}
	Since $\mu_0 = \mu_{n/2} = 0$, and the eigenvalue $0$ has multiplicity exactly two, its index set is $M_0 = \{0, n/2\}$. The spectral idempotent condition for zero transfer requires $(E_0)_{u,v} = 0$. Substituting the indices gives
	$$(E_0)_{u,v} = \frac{1}{n}\left(\omega_n^{0 \cdot (u-v)} + \omega_n^{(n/2) \cdot (u-v)}\right) = \frac{1}{n}\left(1 + (-1)^{u-v}\right) = 0.$$
	This equation holds if and only if $u-v$ is odd.
\end{proof}

Oriented circulant graphs $\Gamma = G(\mathbb{Z}_n,\mathcal{C})$ are vertex-transitive. Hence, if there is zero transfer between $u$ and $v$ for some vertices $u,v\in\mathbb{Z}_n$, then zero transfer also occurs between $u+k \pmod n$ and $v+k \pmod n$ for any integer $k$. Therefore, it suffices to consider zero transfer between $v$ and $0$. Let
\[
	S=\{\,v\in\mathbb{Z}_n : U(t)_{v,0}=0\text{ for all }t\in\mathbb R\,\}.
\]
The following lemma follows.

\begin{lem}
	Let $\Gamma=G(\mathbb{Z}_n,\mathcal{C})$ be an oriented circulant graph. If there is zero transfer between $v$ and $0$ in $\Gamma$, then there is also zero transfer between $n-v$ and $0$. Equivalently, $S=\overline{S}$, where $\overline{S}=\{n-s:\, s\in S\}$.
\end{lem}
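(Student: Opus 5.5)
The plan is to use vertex-transitivity together with the fact, noted in the definition of zero transfer, that zero transfer is a symmetric relation. The translation map $x\mapsto x+k \pmod n$ is an automorphism of $\Gamma$. Hence the Hermitian adjacency matrix $H$ commutes with the corresponding permutation matrix, and so does $U(t)=\exp(-\ii tH)$. This gives $U(t)_{a+k,b+k}=U(t)_{a,b}$ for all $a,b,k\in\mathbb{Z}_n$ and all $t$. Alternatively, one can read this off directly from \Cref{eq:7}, where $U(t)_{u,v}$ depends only on $u-v$.

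Now suppose $v\in S$, so that $U(t)_{v,0}=0$ for all $t$. Taking $k=-v$, i.e.\ $k=n-v$, we get $U(t)_{0,n-v}=U(t)_{v,0}=0$ for all $t$. It remains to convert $U(t)_{0,n-v}$ into $U(t)_{n-v,0}$, and for this I would use symmetry of zero transfer. By \Cref{eq::1a}, $U(t)_{n-v,0}=\overline{U(-t)_{0,n-v}}$, which vanishes because $U(s)_{0,n-v}=0$ for every real $s$. Here the definition of $S$ quantifies over all $t\in\mathbb{R}$; if one only has $t\ge 0$, first pass through \Cref{thm:3}, which makes vanishing an analytic statement valid for all $t$.

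A cleaner alternative that avoids the time reversal is to use \Cref{lem:zero-criterion}. Zero transfer between $v$ and $0$ is equivalent to $\sum_{r\in M_{\theta_i}}\omega_n^{rv}=0$ for every $i$. Taking complex conjugates gives $\sum_{r\in M_{\theta_i}}\omega_n^{r(n-v)}=0$ for every $i$, which is exactly the criterion for zero transfer between $n-v$ and $0$. I would present this conjugation argument as the main proof, since it is one line.

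The equivalence $S=\overline{S}$ then follows immediately. The inclusion $\overline{S}\subseteq S$ is what we just proved, and applying the map $s\mapsto n-s$ (an involution) gives the reverse inclusion. There is no real obstacle here; the only care needed is the convention that $n-0\equiv 0$ and that entries are indexed modulo $n$.
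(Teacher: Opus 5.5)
Your proposal is correct, and your first argument is exactly the paper's proof. The paper translates by $-v$, turning zero transfer between $v$ and $0$ into zero transfer between $0$ and $n-v$. It then uses, without comment, that the relation is symmetric, as asserted in the remark after the definition of zero transfer. You make that swap explicit, and you are right that unitarity alone does not justify it when the definition only quantifies over $t\ge 0$. Property P1 gives $U(s)_{n-v,0}=0$ only for $s\le 0$; analyticity, or equivalently $(H^k)_{n-v,0}=\overline{(H^k)_{0,n-v}}$ together with Lemma~\ref{thm:3}, is what extends it to all times.

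Your preferred conjugation argument via Lemma~\ref{lem:zero-criterion} uses the same two ingredients in spectral form. That $(E_{\theta_i})_{u,v}=\frac1n\sum_{r\in M_{\theta_i}}\omega_n^{r(u-v)}$ depends only on $u-v$ is translation invariance, and taking complex conjugates is the Hermitian symmetry $E_{\theta_i}=E_{\theta_i}^*$. This version is a genuine one-liner and avoids the time-reversal step, since the Vandermonde argument in Theorem~\ref{thm:2} already gives vanishing for all real $t$. The cost is that it relies on the spectral criterion (Theorem~\ref{thm:2} and Lemma~\ref{lem::4}), whereas the paper's argument needs only that translations are automorphisms of $\Gamma$.
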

\begin{proof}
	For any $v\in S$, zero transfer between $v$ and $0$ implies zero transfer between $v-v\pmod n$ and $0-v\pmod n$, that is, between $0$ and $n-v$. Hence $n-v\in S$.
\end{proof}

Furthermore, we can establish a necessary and sufficient condition for zero transfer at the antipodal vertex, that is, at distance $n/2$.

\begin{thm}\label{thm::antipodal}
	Let $\Gamma = G(\mathbb{Z}_n, \mathcal{C})$ be an oriented circulant graph with even order $n$.
	There is zero transfer between the antipodal vertex $n/2$ and vertex $0$ if and only if for every distinct eigenvalue $\theta_i$, the corresponding index set
	\[
		M_{\theta_i} = \{j \in \mathbb{Z}_n \mid \mu_j = \theta_i\}
	\]
	contains the same number of even and odd integers.
\end{thm}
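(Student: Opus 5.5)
The plan is to apply \Cref{lem:zero-criterion} with $u=n/2$ and $v=0$, and then evaluate the relevant root of unity explicitly. By that lemma, zero transfer between $n/2$ and $0$ holds if and only if, for every distinct eigenvalue $\theta_i$,
\[
	(E_{\theta_i})_{n/2,0}=\frac1n\sum_{r\in M_{\theta_i}}\omega_n^{r\cdot n/2}=0 .
\]
Since $\omega_n^{n/2}=e^{\pi\ii}=-1$, we have $\omega_n^{r n/2}=(-1)^r$. This value depends only on the parity of $r$, and it is well defined for $r\in\mathbb{Z}_n$ because $n$ is even, so reducing $r$ modulo $n$ does not change its parity.

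Consequently each sum becomes
\[
	\sum_{r\in M_{\theta_i}}(-1)^r=e_i-o_i,
\]
where $e_i$ and $o_i$ denote the numbers of even and odd elements of $M_{\theta_i}$. This sum vanishes exactly when $e_i=o_i$. Requiring it for every $i$ gives both directions of the equivalence at once, because \Cref{lem:zero-criterion} is itself an equivalence and the reduction above is reversible at each step.

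The only point requiring care is the well-definedness of parity on $\mathbb{Z}_n$, which is where the evenness of $n$ is used. I would therefore state this explicitly, taking representatives $0\le r\le n-1$. Everything else is a direct substitution, so I expect no real obstacle.

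As a consistency check, $M_0\supseteq\{0,n/2\}$. When $n/2$ is odd, this set already contributes one even and one odd index, which is compatible with \Cref{cor::1}.
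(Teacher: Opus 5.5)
Your proposal is correct and follows the paper's proof essentially step for step. You apply \Cref{lem:zero-criterion} with $u-v=n/2$, use $\omega_n^{rn/2}=(-1)^r$, and note that each idempotent entry is the difference between the even and odd counts in $M_{\theta_i}$. Your explicit remark that parity is well defined on $\mathbb{Z}_n$ because $n$ is even is a small point the paper leaves implicit.
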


\begin{proof}
	By~\Cref{lem:zero-criterion}, zero transfer occurs between $n/2$ and $0$ if and only if $(E_{\theta_i})_{n/2,0} = 0$ for every eigenvalue $\theta_i$. We expand the $(n/2,0)$-entry of the spectral idempotent:
	$$(E_{\theta_i})_{n/2,0} = \frac{1}{n} \sum_{j \in M_{\theta_i}} \omega_n^{j (n/2)} = \frac{1}{n} \sum_{j \in M_{\theta_i}} \exp(\ii \pi j) = \frac{1}{n} \sum_{j \in M_{\theta_i}} (-1)^j.$$
	Observe that $(-1)^j = 1$ if $j$ is even, and $(-1)^j = -1$ if $j$ is odd. Therefore, the sum $\sum_{j \in M_{\theta_i}} (-1)^j = 0$ if and only if the numbers of even and odd indices in $M_{\theta_i}$ are equal.
\end{proof}

\begin{cor}
	Let $\Gamma = G(\mathbb{Z}_n,\mathcal{C})$ be an oriented circulant graph with even order $n$.
	If zero transfer occurs between the antipodal vertex $n/2$ and $0$, then every eigenvalue has even multiplicity.
\end{cor}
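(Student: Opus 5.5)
This is an immediate consequence of \Cref{thm::antipodal}. Suppose zero transfer occurs between $n/2$ and $0$. By \Cref{thm::antipodal}, every index set $M_{\theta_i}$ contains the same number of even and odd integers. Call this common number $a_i$.

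Since $M_{\theta_i}$ is the disjoint union of its even and odd elements, its size is $|M_{\theta_i}| = a_i + a_i = 2a_i$. By definition, the multiplicity of $\theta_i$ is $k_i=|M_{\theta_i}|$, so $k_i = 2a_i$ is even. This holds for every distinct eigenvalue $\theta_i$, which proves the claim.

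The only point to check is that the multiplicity really equals $|M_{\theta_i}|$. This holds because the eigenvectors $\mathbf v_0,\dots,\mathbf v_{n-1}$ in \Cref{eq::1} form an orthogonal basis, so the eigenspace of $\theta_i$ is spanned exactly by $\{\mathbf v_j : j\in M_{\theta_i}\}$. The paper already records this fact when it writes $|M_{\theta_i}|=k_i$.

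As a consistency check, $M_0$ contains both $0$ (even) and $n/2$. Since $M_0$ must be balanced, it must contain at least one odd index, so $0$ has multiplicity at least $2$. This agrees with \Cref{cor::1}.
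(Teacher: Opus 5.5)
Your argument is correct and is exactly the paper's proof: Theorem~\ref{thm::antipodal} forces each $M_{\theta_i}$ to split evenly between even and odd indices, so the multiplicity $k_i=|M_{\theta_i}|$ is twice that common count. The closing consistency check is not needed, and $0$ has multiplicity at least $2$ anyway, since the distinct indices $0$ and $n/2$ both lie in $M_0$.
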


\begin{proof}
	By Theorem~\ref{thm::antipodal}, zero transfer at the antipodal vertex requires that each index set $M_{\theta_i}$ contains the same number of even and odd integers.

	Therefore, we have
	\[
		|M_{\theta_i}|
		=
		|M_{\theta_i}\cap(2\mathbb Z)|
		+
		|M_{\theta_i}\cap(2\mathbb Z+1)|=2|M_{\theta_i}\cap(2\mathbb Z)|=2|M_{\theta_i}\cap(2\mathbb Z+1)|.
	\]

	Since $|M_{\theta_i}|$ equals the multiplicity of the eigenvalue $\theta_i$, every eigenvalue must have even multiplicity. In particular, no eigenvalue can be simple.
\end{proof}

\begin{cor}
	Let $\Gamma = G(\mathbb{Z}_n,\mathcal{C})$ be an oriented circulant graph of order $n$ with $n \equiv 0 \pmod 4$. If the eigenvalue $0$ has multiplicity two, then zero transfer cannot occur between the antipodal vertex $n/2$ and $0$.
\end{cor}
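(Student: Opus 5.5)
Since $\mu_0=\mu_{n/2}=0$ for even $n$, the hypothesis that $0$ has multiplicity exactly two forces $M_0=\{0,n/2\}$. I will apply Theorem~\ref{thm::antipodal} directly to this index set. Zero transfer between $n/2$ and $0$ requires $M_0$ to contain equally many even and odd integers. Equivalently, as in the proof of that theorem, we need
\[
(E_0)_{n/2,0}=\frac1n\left((-1)^0+(-1)^{n/2}\right)=0 .
\]

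The key step is a parity check. Because $n\equiv 0\pmod 4$, the index $n/2$ is even. Hence both elements of $M_0$ are even, and $(E_0)_{n/2,0}=\frac{1}{n}(1+1)=\frac{2}{n}\neq 0$.

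This contradicts the criterion of Theorem~\ref{thm::antipodal}, and equally the requirement of Lemma~\ref{lem:zero-criterion} that every spectral idempotent vanish at the $(n/2,0)$ entry. So zero transfer between $n/2$ and $0$ is impossible.

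An alternative route is through Corollary~\ref{cor::1}. Under the same hypotheses it says that zero transfer between $u$ and $v$ forces $u-v$ to be odd. With $u-v=n/2$, which is even when $4\mid n$, the conclusion follows at once.

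No step here is a real obstacle. The only point needing care is the identification $M_0=\{0,n/2\}$, which uses that $0$ and $n/2$ are distinct indices with $\mu_0=\mu_{n/2}=0$, together with the multiplicity-two assumption.
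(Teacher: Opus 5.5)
Your proof is correct and follows the paper's argument essentially verbatim: you identify $M_0=\{0,n/2\}$, observe that $n/2$ is even when $4\mid n$, and conclude that $M_0$ has no odd element, so the balance condition of Theorem~\ref{thm::antipodal} fails (equivalently, $(E_0)_{n/2,0}=2/n\neq 0$). Your alternative route via Corollary~\ref{cor::1} is also valid, since that corollary applies under the same multiplicity-two hypothesis and the distance $n/2$ is even.
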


\begin{proof}
	Assume, to the contrary, that zero transfer occurs between $n/2$ and $0$.
	Since $\mu_0=\mu_{n/2}=0$ and the eigenvalue $0$ has multiplicity two, the corresponding index set is
	\[
		M_0=\{0,n/2\}.
	\]

	Because $n \equiv 0 \pmod 4$, the integer $n/2$ is even. Hence both elements of $M_0$ are even, and therefore
	\[
		|M_0 \cap (2\mathbb{Z})|=2,
		\qquad
		|M_0 \cap (2\mathbb{Z}+1)|=0 .
	\]

	However, by~\Cref{thm::antipodal}, zero transfer at the antipodal vertex requires that each index set $M_{\theta_i}$ contain the same number of even and odd integers. The set $M_0$ does not satisfy this condition. This contradiction shows that zero transfer between $n/2$ and $0$ is impossible.
\end{proof}

\begin{thm}\label{thm:4.4}
	Let $\Gamma = G(\mathbb{Z}_n,\mathcal{C})$ be an oriented circulant graph with $n \equiv 0 \pmod 4$. There is zero transfer between $n/4$ and $0$ if and only if, for every eigenvalue $\theta_i$ with index set $M_{\theta_i}$,
	\[|M_{\theta_i} \cap (4\mathbb{Z})|=|M_{\theta_i} \cap (4\mathbb{Z}+2)|
	\]
	and
	\[
		|M_{\theta_i} \cap (4\mathbb{Z}+1)|=|M_{\theta_i} \cap (4\mathbb{Z}+3)|.
	\]
\end{thm}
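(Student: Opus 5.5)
We follow the same route as the proof of \Cref{thm::antipodal}. By \Cref{lem:zero-criterion}, zero transfer between $n/4$ and $0$ is equivalent to $(E_{\theta_i})_{n/4,0}=0$ for every distinct eigenvalue $\theta_i$. By \Cref{eq::6}, this entry is
\[
	(E_{\theta_i})_{n/4,0}=\frac1n\sum_{j\in M_{\theta_i}}\omega_n^{j(n/4)}=\frac1n\sum_{j\in M_{\theta_i}}\ii^{\,j},
\]
because $\omega_n^{n/4}=e^{2\pi\ii/4}=\ii$. Since $\ii^j$ depends only on $j \bmod 4$, the four classes $4\mathbb Z$, $4\mathbb Z+1$, $4\mathbb Z+2$, $4\mathbb Z+3$ contribute $1$, $\ii$, $-1$, $-\ii$ respectively. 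This uses $4\mid n$, so residues of $j\in\mathbb Z_n$ modulo $4$ are well defined.

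Write $a_r=|M_{\theta_i}\cap(4\mathbb Z+r)|$ for $r=0,1,2,3$. The sum then equals
\[
	(a_0-a_2)+\ii\,(a_1-a_3).
\]
The $a_r$ are integers, so $a_0-a_2$ and $a_1-a_3$ are real. The complex number therefore vanishes if and only if both its real and imaginary parts vanish, that is, $a_0=a_2$ and $a_1=a_3$.

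Requiring this for every $\theta_i$ gives the claimed equivalence in both directions. There is no real obstacle here. The only points needing care are that $j\mapsto j\bmod 4$ is well defined on $\mathbb Z_n$ when $4\mid n$, and that splitting into real and imaginary parts is legitimate because the counts are integers.
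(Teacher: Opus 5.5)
Your proposal is correct and follows the paper's proof step for step. Both apply the zero-transfer criterion for circulants, rewrite $(E_{\theta_i})_{n/4,0}$ as $\frac1n\sum_{j\in M_{\theta_i}}\ii^{\,j}$, group the indices by residue modulo $4$, and require the real and imaginary parts to vanish separately. Your remark that $j \bmod 4$ is well defined on $\mathbb{Z}_n$ because $4\mid n$ is a small point the paper leaves implicit.
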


\begin{proof}
	By~\Cref{lem:zero-criterion}, zero transfer between $n/4$ and $0$ occurs if and only if
	\[
		(E_{\theta_i})_{n/4,0}=0
	\]
	for every eigenvalue $\theta_i$.

	We expand the $(n/4,0)$-entry of the spectral idempotent:
	\[
		(E_{\theta_i})_{n/4,0}
		=
		\frac{1}{n}
		\sum_{j\in M_{\theta_i}}\omega_n^{j(n/4)}
		=
		\frac{1}{n}
		\sum_{j\in M_{\theta_i}}
		\exp\!\left(\frac{\ii \pi j}{2}\right)
		=
		\frac{1}{n}\sum_{j\in M_{\theta_i}} \ii^j .
	\]

	Since
	\[
		\ii^j=
		\begin{cases}
			1,    & j\equiv0\pmod4, \\
			\ii,  & j\equiv1\pmod4, \\
			-1,   & j\equiv2\pmod4, \\
			-\ii, & j\equiv3\pmod4,
		\end{cases}
	\]
	the above sum can be written as
	\[
		\sum_{j\in M_{\theta_i}} \ii^j
		=
		\bigl(|M_{\theta_i}\cap(4\mathbb Z)|-|M_{\theta_i}\cap(4\mathbb Z+2)|\bigr)
		+
		\ii\bigl(|M_{\theta_i}\cap(4\mathbb Z+1)|-|M_{\theta_i}\cap(4\mathbb Z+3)|\bigr).
	\]

	Therefore $(E_{\theta_i})_{n/4,0}=0$ holds if and only if both the real and imaginary parts are zero, which implies
	\[
		|M_{\theta_i}\cap(4\mathbb Z)|
		=
		|M_{\theta_i}\cap(4\mathbb Z+2)|
	\]
	and
	\[
		|M_{\theta_i}\cap(4\mathbb Z+1)|
		=
		|M_{\theta_i}\cap(4\mathbb Z+3)|.
	\]
	This completes the proof.
\end{proof}
\begin{thm}\label{thm:4.5}
	Let $\Gamma = G(\mathbb{Z}_n, \mathcal{C})$ be an oriented circulant graph with $n \equiv 0 \pmod 4$, where every element of $\mathcal{C}$ is odd. Then:
	\begin{enumerate}[label = \bf(\roman*)]
		\item If $n \equiv 4 \pmod 8$, the real part of $(E_{\theta_i})_{n/4,0}$ is zero for every eigenvalue $\theta_i$. Zero transfer occurs between $n/4$ and $0$ if and only if $|M_{\theta_i} \cap (4\mathbb{Z}+1)| = |M_{\theta_i} \cap (4\mathbb{Z}+3)|$ for every $\theta_i$.
		\item If $n \equiv 0 \pmod 8$, the imaginary part of $(E_{\theta_i})_{n/4,0}$ is zero for every eigenvalue $\theta_i$. Zero transfer occurs between $n/4$ and $0$ if and only if $|M_{\theta_i} \cap 4\mathbb{Z}| = |M_{\theta_i} \cap (4\mathbb{Z}+2)|$ for every $\theta_i$.
	\end{enumerate}
\end{thm}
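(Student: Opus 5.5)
The plan is to find an involution on $\mathbb{Z}_n$ that preserves every index set $M_{\theta_i}$ and moves residues modulo $4$ in a controlled way. Then I combine it with the expansion in the proof of \Cref{thm:4.4}, namely
\[
n\,(E_{\theta_i})_{n/4,0}=\bigl(|M_{\theta_i}\cap 4\mathbb Z|-|M_{\theta_i}\cap(4\mathbb Z+2)|\bigr)+\ii\bigl(|M_{\theta_i}\cap(4\mathbb Z+1)|-|M_{\theta_i}\cap(4\mathbb Z+3)|\bigr).
\]
Since $4\mid n$, the residue of $j\in\mathbb{Z}_n$ modulo $4$ is well defined, so these counts make sense.

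\textbf{Step 1: the involution preserves eigenvalues.} I use \Cref{eq::22}. For each $k\in\mathcal C$ we have $\sin\bigl(2\pi(j+n/2)k/n\bigr)=\sin(2\pi jk/n+\pi k)=-\sin(2\pi jk/n)$, because $k$ is odd. Hence $\mu_{j+n/2}=-\mu_j$. Also $\mu_{-j}=-\mu_j$, since sine is odd. Combining the two gives $\mu_{n/2-j}=\mu_j$ for all $j$. So the involution $\sigma(j)=n/2-j$ on $\mathbb{Z}_n$ maps each $M_{\theta_i}$ bijectively onto itself.

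\textbf{Step 2: how $\sigma$ acts on residues modulo $4$.} If $n\equiv 4\pmod 8$, then $n/2\equiv 2\pmod 4$. In this case $\sigma$ swaps the classes $4\mathbb Z$ and $4\mathbb Z+2$, and preserves each of $4\mathbb Z+1$ and $4\mathbb Z+3$. Restricted to $M_{\theta_i}$, it therefore gives a bijection between $M_{\theta_i}\cap4\mathbb Z$ and $M_{\theta_i}\cap(4\mathbb Z+2)$. So these two counts agree, and the real part of $(E_{\theta_i})_{n/4,0}$ vanishes.

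If $n\equiv0\pmod 8$, then $n/2\equiv0\pmod4$. Now $\sigma$ swaps $4\mathbb Z+1$ and $4\mathbb Z+3$, which makes the two odd counts equal and kills the imaginary part.

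\textbf{Step 3: conclude via \Cref{thm:4.4}.} By \Cref{thm:4.4}, zero transfer between $n/4$ and $0$ is equivalent to both count equalities holding for every $\theta_i$. In each case, Step 2 shows that one of the two equalities holds automatically. Hence zero transfer is equivalent to the remaining equality, as stated in (i) and (ii).

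The only delicate point is Step 1, specifically the sign identity $\mu_{j+n/2}=-\mu_j$. This is exactly where the hypothesis that every element of $\mathcal C$ is odd enters. Everything after that is bookkeeping of residues. No fixed-point issue arises: in each case $\sigma$ exchanges two distinct residue classes, so the counts are matched by a genuine bijection between them.
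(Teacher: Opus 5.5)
Your proposal is correct and is essentially the paper's proof. You use the same involution $j\mapsto n/2-j$ preserving each $M_{\theta_i}$, the same mod-$4$ case split, and the same appeal to \Cref{thm:4.4}. The only difference is cosmetic: you obtain $\mu_{n/2-j}=\mu_j$ by composing $\mu_{j+n/2}=-\mu_j$ with $\mu_{-j}=-\mu_j$, whereas the paper applies $\sin(\pi k-x)=\sin x$ for odd $k$ directly.
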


\begin{proof}
	By~\Cref{eq::22}, the eigenvalues of $\Gamma$ are given by
	\[
		\mu_j = -2 \sum_{k \in \mathcal{C}} \sin\left(\frac{2\pi j k}{n}\right).
	\]
	Since every $k \in \mathcal{C}$ is odd, we have
	\[
		\mu_{n/2-j}
		= -2 \sum_{k \in \mathcal{C}} \sin\left(\pi k - \frac{2\pi j k}{n}\right)
		= -2 \sum_{k \in \mathcal{C}} \sin\left(\frac{2\pi j k}{n}\right)
		= \mu_j.
	\]
	Hence, the map $f(j) = n/2 - j \pmod n$ preserves each index set $M_{\theta_i}$ and defines a bijection on it.

	We distinguish two cases according to $n \bmod 8$.

	\textbf{Case 1:} $n \equiv 4 \pmod{8}$. Then $n/2 \equiv 2 \pmod{4}$. For $j \equiv 0 \pmod{4}$,
	\[
		f(j) \equiv 2 \pmod{4}.
	\]
	Thus, $f$ induces a bijection between $M_{\theta_i} \cap 4\mathbb{Z}$ and $M_{\theta_i} \cap (4\mathbb{Z}+2)$, and hence
	\[
		|M_{\theta_i} \cap 4\mathbb{Z}| = |M_{\theta_i} \cap (4\mathbb{Z}+2)|.
	\]
	By~\Cref{thm:4.4}, the real part of the idempotent sum is zero, so only the imaginary part condition remains.

	\textbf{Case 2:} $n \equiv 0 \pmod{8}$. Then $n/2 \equiv 0 \pmod{4}$. For $j \equiv 1 \pmod{4}$,
	\[
		f(j) \equiv 3 \pmod{4}.
	\]
	Thus, $f$ induces a bijection between $M_{\theta_i} \cap (4\mathbb{Z}+1)$ and $M_{\theta_i} \cap (4\mathbb{Z}+3)$, and hence
	\[
		|M_{\theta_i} \cap (4\mathbb{Z}+1)| = |M_{\theta_i} \cap (4\mathbb{Z}+3)|.
	\]
	By~\Cref{thm:4.4}, the imaginary part of the idempotent sum is zero, so only the real part condition remains.
\end{proof}

\begin{cor}
	Let $\Gamma = G(\mathbb{Z}_n, \mathcal{C})$ be an oriented circulant graph with $n \equiv 0 \pmod 4$ and $\mathcal{C}$ consisting only of odd integers. If $\mathcal{C}$ can be partitioned into pairs $(k_1, k_2)$ such that:
	\begin{enumerate}[label = \bf(\roman*)]
		\item $k_2 - k_1 \equiv n/2 \pmod n$ when $n \equiv 4 \pmod 8$, or
		\item $k_1 + k_2 \equiv n/2 \pmod n$ when $n \equiv 0 \pmod 8$,
	\end{enumerate}
	then zero transfer occurs between $n/4$ and $0$.
\end{cor}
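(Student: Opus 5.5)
The plan is to reduce everything to Theorem~\ref{thm:4.5}. Since every element of $\mathcal{C}$ is odd and $n\equiv 0\pmod 4$, that theorem says zero transfer between $n/4$ and $0$ is equivalent to a single counting condition on each index set $M_{\theta_i}$:
\begin{itemize}
\item in case (i), $n\equiv4\pmod 8$, we need $|M_{\theta_i}\cap(4\mathbb Z+1)|=|M_{\theta_i}\cap(4\mathbb Z+3)|$;
\item in case (ii), $n\equiv0\pmod 8$, we need $|M_{\theta_i}\cap 4\mathbb Z|=|M_{\theta_i}\cap(4\mathbb Z+2)|$.
\end{itemize}
Because $4\mid n$, residues modulo $4$ are well defined on $\mathbb Z_n$, so these counts make sense. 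The key step is to show that the pairing hypothesis forces all indices of the relevant parity into the single index set $M_0$. Once that is done, the counting is immediate.

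For case (i), take a pair with $k_2\equiv k_1+n/2$. Using \Cref{eq::22}, for odd $j$ we have
\[
\sin\!\left(\tfrac{2\pi jk_2}{n}\right)=\sin\!\left(\tfrac{2\pi jk_1}{n}+\pi j\right)=-\sin\!\left(\tfrac{2\pi jk_1}{n}\right),
\]
so each pair contributes zero to $\mu_j$. Summing over the partition gives $\mu_j=0$ for every odd $j$, that is, all odd indices lie in $M_0$. Any other eigenvalue $\theta_i\ne0$ therefore has only even indices. Both counts in the condition are then $0$, and it holds trivially.

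For $M_0$ itself, the odd indices are exactly all odd residues in $\mathbb Z_n$. Among these, $n/4$ are $\equiv1\pmod 4$ and $n/4$ are $\equiv3\pmod 4$, so the counts agree. Alternatively, the negation $j\mapsto -j$ preserves $M_0$ and swaps these two classes.

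For case (ii), take a pair with $k_2\equiv n/2-k_1$. Then
\[
\sin\!\left(\tfrac{2\pi jk_2}{n}\right)=\sin\!\left(\pi j-\tfrac{2\pi jk_1}{n}\right)=-(-1)^j\sin\!\left(\tfrac{2\pi jk_1}{n}\right),
\]
which cancels the $k_1$ term when $j$ is even. Hence $\mu_j=0$ for all even $j$, so every even index lies in $M_0$ and every other index set consists only of odd indices. For those other sets the required equality is $0=0$. For $M_0$, its even part is all of $2\mathbb Z_n$, which contains exactly $n/4$ residues $\equiv0$ and $n/4$ residues $\equiv2\pmod 4$.

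In both cases Theorem~\ref{thm:4.5} (equivalently Theorem~\ref{thm:4.4}, whose other condition is automatic here) yields zero transfer. The only point needing care is the trigonometric cancellation and checking that the pairing $k_1\mapsto k_2$ really partitions $\mathcal{C}$, so that the sum over $\mathcal{C}$ splits into cancelling pairs. I expect no genuine obstacle beyond that.
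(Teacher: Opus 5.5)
Your proposal is correct and follows the paper's proof essentially step for step. Both reduce to Theorem~\ref{thm:4.5} and use the pairing to get $\mu_j=0$ for every odd $j$ in case (i), or every even $j$ in case (ii); then each nonzero eigenvalue's index set balances trivially, and $M_0$ contains the full residue class of the relevant parity. The only cosmetic difference is in case (i) for $M_0$: the paper uses the involution $j\mapsto j+n/2$, while you count the odd residues directly or use $j\mapsto -j$, which is equally valid.
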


\begin{proof}
	Under the pairing conditions, we verify that the index sets satisfy the condition in~\Cref{thm:4.5}. For a pair $(k_1,k_2)$, write
	\[
		\mu_j^{(k_1,k_2)} = -2\Big(\sin\Big(\frac{2\pi j k_1}{n}\Big) + \sin\Big(\frac{2\pi j k_2}{n}\Big)\Big).
	\]

	\textbf{Case 1:} $n \equiv 4 \pmod{8}$.
	Here $k_2 = k_1 + n/2$, and
	\[
		\mu_j^{(k_1,k_2)}
		= -2\left(\sin\left(\frac{2\pi j k_1}{n}\right)
		+ \sin\left(\frac{2\pi j k_1}{n} + r\pi\right)\right)
		= -2 \sin\left(\frac{2\pi j k_1}{n}\right)\big(1 + (-1)^j\big).
	\]
	Hence $\mu_j^{(k_1,k_2)} = 0$ for all odd $j$, and therefore $\mu_j=0$ for all odd $j$. Thus all odd indices lie in $M_0$.

	Since $n/2 \equiv 2 \pmod{4}$, the map $j \mapsto j + n/2$ sends
	\[
		4\mathbb{Z}+1 \longleftrightarrow 4\mathbb{Z}+3,
	\]
	and preserves $M_0$. It follows that
	\[
		|M_0 \cap (4\mathbb{Z}+1)| = |M_0 \cap (4\mathbb{Z}+3)|.
	\]
	For $\theta \neq 0$, the set $M_\theta$ contains only even indices, which satisfy the required the condition by~\Cref{thm:4.5}.

	\textbf{Case 2:} $n \equiv 0 \pmod{8}$.
	Here $k_2 = n/2 - k_1$, and
	\[
		\mu_j^{(k_1,k_2)}
		= -2\left(\sin\left(\frac{2\pi j k_1}{n}\right)
		+ \sin\left(j\pi - \frac{2\pi j k_1}{n}\right)\right).
	\]
	If $j$ is even, then $j\pi \equiv 0 \pmod{2\pi}$, so $\sin(j\pi - x) = -\sin x$, and hence $\mu_j^{(k_1,k_2)}=0$. Therefore $\mu_j=0$ for all even $j$, and all even indices lie in $M_0$.

	Since $n \equiv 0 \pmod{8}$, the set of even indices has size $n/2$, and is evenly split between $4\mathbb{Z}$ and $4\mathbb{Z}+2$. Hence
	\[
		|M_0 \cap 4\mathbb{Z}| = |M_0 \cap (4\mathbb{Z}+2)| = \frac{n}{4}.
	\]
	For $\theta \neq 0$, the set $M_\theta$ contains only odd indices, which satisfy the required the condition by~\Cref{thm:4.5}.

	In both cases, the conditions of~\Cref{thm:4.5} are satisfied.
\end{proof}

\begin{con}
	Let $\Gamma = G(\mathbb{Z}_n, \mathcal{C})$ be an oriented circulant graph with $n \equiv 2 \pmod{4}$. If zero transfer occurs between vertex $v$ and $0$, then $v$ must be odd.
\end{con}

\noindent \textbf{Remark.} The necessity of this parity restriction is strongly suggested by computational evidence, although a general algebraic proof remains open. Let
\[
	M_0=\{j\in \mathbb Z_n:\mu_j=0\}
\]
denote the index set corresponding to the eigenvalue $0$. Since $\mu_j=-\mu_{n-j}$, we have $0,n/2\in M_0$. If $v$ is even, say $v=2m$, then the $(v,0)$-entry of the corresponding spectral idempotent satisfies
\[
	(E_0)_{v,0}
	=\frac1n\sum_{j\in M_0}\omega_n^{jv}
	=\frac1n\left(
	2+\sum_{j\in M_0\setminus\{0,n/2\}}
	\omega_n^{2mj}
	\right).
\]

Since $n\equiv2\pmod4$, the integer $n/2$ is odd, and hence each term $\omega_n^{2mj}$ lies in the odd cyclotomic field $\mathbb Q(\omega_{n/2})$. Therefore, zero transfer would require
\[
	\sum_{j\in M_0\setminus\{0,n/2\}}
	\omega_n^{2mj}=-2.
\]
While a single conjugate pair $\{j,n-j\}$ cannot yield such a contribution, it appears highly nontrivial to exclude the possibility that a larger union of symmetric pairs in $M_0$ could sum to $-2$.

Nevertheless, our computational search for all even integers $n\le 20$ (see Appendix) shows that whenever $n\equiv2\pmod4$, the zero transfer set $S$ consists entirely of odd integers. This leads us to conjecture that such an exact cancellation cannot occur in the relevant odd cyclotomic fields arising from connected oriented circulant graphs.

\begin{ex}
	Let $\Gamma$ be the oriented circulant graph $G(\mathbb{Z}_{12},\{2,3,8\})$ shown in Fig.~\ref{FIG}. Then zero transfer occurs between every vertex of $S=\{1,5,7,11\}$  and vertex $0$.
\end{ex}

\begin{figure}[htbp!]
	\centering
	\begin{tikzpicture}
		\foreach \x in {0,1,...,11}
			{
				\ifnum\x=0\relax
					\definecolor{nodecolor}{RGB}{220,70,70} % soft red
				\else\ifnum\x=1\relax
						\definecolor{nodecolor}{RGB}{100,149,237} % soft blue
					\else\ifnum\x=5\relax
							\definecolor{nodecolor}{RGB}{100,149,237}
						\else\ifnum\x=7\relax
								\definecolor{nodecolor}{RGB}{100,149,237}
							\else\ifnum\x=11\relax
									\definecolor{nodecolor}{RGB}{100,149,237}
								\else
									\definecolor{nodecolor}{RGB}{230,230,230} % light gray
								\fi\fi\fi\fi\fi

				\node[
					circle,
					fill=nodecolor,
					draw=black!70,
					line width=0.5pt,
					inner sep=0pt,
					minimum size=5mm
				] (a\x) at (-\x*30:2.5){\x};
			}
		\foreach \x in {0,1,...,11}
			{
				\pgfmathparse{int(mod(\x+2,12))}
				\draw[-stealth] (a\x) -- (a\pgfmathresult);
			}
		\foreach \x in {0,1,...,11}
			{
				\foreach[parse] \y in{\x+3,\x+8}
					{
						\pgfmathparse{int(mod(\y,12))}
						\draw[-stealth] (a\x) -- (a\pgfmathresult);
					}
			}
	\end{tikzpicture}
	\caption{Zero transfer in $G(\mathbb{Z}_{12},\{2,3,8\})$. \label{FIG}}
\end{figure}

While the parity restrictions and antipodal conditions provide analytical criteria for specific cases, determining zero transfer for graphs with higher eigenvalue multiplicities involves resolving complex vanishing sums of roots of unity. To better understand the distribution and frequency of zero transfer in these cases, we conducted a computational search over connected oriented circulant graphs for even $n \le 20$. The counts are summarized in Table~\ref{tab:sample}.

\begin{table}[htbp]
	\centering
	\caption{Counts of zero transfer in oriented circulant graphs}
	\label{tab:sample}
	\begin{tabular}{ccccccccc}
		\toprule
		\textbf{Order} & \textbf{6} & \textbf{8} & \textbf{10} & \textbf{12} & \textbf{14} & \textbf{16} & \textbf{18} & \textbf{20} \\ \midrule
		\textbf{Count} & 2          & 2          & 24          & 10          & 74          & 14          & 310         & 146         \\ \bottomrule
	\end{tabular}
\end{table}

\noindent \textbf{Remark on Computational Results.} The data presented in Table~\ref{tab:sample} exhibit a clear parity phenomenon. The frequency of zero transfer instances is significantly higher when $n \equiv 2 \pmod 4$ (for example, $n=10,14,18$) than for more highly divisible even orders with $n \equiv 0 \pmod 4$ (for example, $n=8,16$).

\section{Conclusions}

In this paper, we investigated zero transfer on mixed graphs, with particular emphasis on oriented circulant graphs. Using Taylor expansion and spectral decomposition, we derived a general algebraic characterization of zero transfer (Theorem~\ref{thm:2}), showing that it is completely determined by the vanishing entries of the spectral idempotents. We also proved that connected undirected graphs cannot exhibit zero transfer by the Perron--Frobenius theorem, implying that directed, signed, or mixed structures are essential for this phenomenon in connected networks.

For oriented circulant graphs, we showed that zero transfer is strongly constrained by both the graph order and the eigenvalue multiplicities. In particular, connected oriented circulant graphs of prime order do not admit zero transfer (Theorem~\ref{thm:11}) because they necessarily possess a simple eigenvalue. Although this obstruction extends to small odd composite orders, we identified $n=21$ as the smallest odd order admitting zero transfer.

For even orders, the possible zero transfer vertices satisfy strong parity restrictions. If an eigenvalue has multiplicity two, then zero transfer can occur only between vertices at odd distance (Corollary~\ref{cor::1}). We further characterized zero transfer at the antipodal vertex (Theorem~\ref{thm::antipodal}) and at distance $n/4$ (Theorem~\ref{thm:4.4}) through parity-balanced eigenspace conditions. Finally, supported by exhaustive computations for $n \le 20$, we conjectured that when $n \equiv 2 \pmod 4$, zero transfer occurs exclusively at odd distances.

\vspace{0.5em}

\noindent\textbf{Future Work:} Zero Transfer and Number-Theoretic Structures

Computational evidence suggests a close connection between the zero transfer set $S$ and the arithmetic structure of the cyclic group $\mathbb{Z}_n$. In several examples, the zero transfer vertices coincide with the reduced residue system modulo $n$, namely
\[
	S=\{v\in \mathbb{Z}_n:\gcd(v,n)=1\}.
\]

This observation indicates a possible link between zero transfer and the cyclotomic structure of the eigenvalue equations. Indeed, if
\[
	\sum_{j\in M_r}\omega_n^j=0,
\]
then the primitive $n$-th root of unity $\omega_n$ is a root of the polynomial
\[
	P(x)=\sum_{j\in M_r}x^j.
\]
Since the cyclotomic polynomial $\Phi_n(x)$ is irreducible over $\mathbb{Q}$, one expects that, under suitable conditions, all primitive $n$-th roots $\omega_n^v$ with $\gcd(v,n)=1$ must also satisfy the same relation. This may explain the appearance of the reduced residue system in the zero transfer set.

A natural direction for future work is to determine the precise algebraic conditions on the generating set $\mathcal{C}$ under which this phenomenon occurs, potentially using techniques from cyclotomic fields, character theory, and representation theory of finite abelian groups.

\appendix % 开始附录

\renewcommand\thesection{\Alph{section}}
\setcounter{section}{0}
\section{Appendix}\label{Appendix}

\textit{Data description:}
Let $G(\mathbb{Z}_n, \mathcal{C})$ be an oriented circulant graph. The tables below list all cases of zero transfer in oriented circulant graphs for $n \le 20$. The data are categorized by the number of vertices $n$ and the set of vertices $S$ that exhibit zero transfer with vertex $0$.

\subsection*{$n=6$.}
\begin{itemize}[label=\textbullet,leftmargin=1.15em,itemsep=0.65em,topsep=0.12em,parsep=0pt,partopsep=0pt]
	\item \textbf{$S=\{3\}$}, $\lvert\mathcal{C}\rvert=2$.

	      $\mathcal{C}\in$\{\{1\}, \{5\}\}.

\end{itemize}
\subsection*{$n=8$.}
\begin{itemize}[label=\textbullet,leftmargin=1.15em,itemsep=0.65em,topsep=0.12em,parsep=0pt,partopsep=0pt]
	\item \textbf{$S=\{2, 6\}$}, $\lvert\mathcal{C}\rvert=2$.

	      $\mathcal{C}\in$\{\{1, 3\}, \{5, 7\} \}.

\end{itemize}
\subsection*{$n=10$.}
\begin{itemize}[label=\textbullet,leftmargin=1.15em,itemsep=0.65em,topsep=0.12em,parsep=0pt,partopsep=0pt]
	\item \textbf{$S=\{5\}$}, $\lvert\mathcal{C}\rvert=24$.

	      $\mathcal{C}\in$\{\{1\}, \{3\}, \{7\}, \{9\}, \{1, 2\}, \{1, 3\}, \{1, 7\}, \{1, 8\}, \{2, 9\}, \{3, 4\}, \{3, 6\}, \{3, 9\}, \{4, 7\}, \{6, 7\}, \{7, 9\}, \{8, 9\}, \{1, 2, 3, 6\}, \{1, 2, 4, 7\}, \{1, 3, 4, 8\}, \{1, 6, 7, 8\}, \{2, 3, 4, 9\}, \{2, 6, 7, 9\}, \{3, 6, 8, 9\}, \{4, 7, 8, 9\} \}.
\end{itemize}
\subsection*{$n=12$.}
\begin{itemize}[label=\textbullet,leftmargin=1.15em,itemsep=0.65em,topsep=0.12em,parsep=0pt,partopsep=0pt]
	\item \textbf{$S=\{1, 5, 7, 11\}$}, $\lvert\mathcal{C}\rvert=4$.

	      $\mathcal{C}\in$ \{ \{2, 3, 8\}, \{2, 8, 9\}, \{3, 4, 10\}, \{4, 9, 10\} \}.

	\item \textbf{$S=\{3, 6, 9\}$}, $\lvert\mathcal{C}\rvert=4$.

	      $\mathcal{C}\in$\{ \{1, 2, 4, 7\}, \{1, 7, 8, 10\}, \{2, 4, 5, 11\}, \{5, 8, 10, 11\} \}.

	\item \textbf{$S=\{3, 9\}$}, $\lvert\mathcal{C}\rvert=2$.

	      $\mathcal{C}\in$ \{ \{1, 7\}, \{5, 11\} \}.

\end{itemize}
\subsection*{$n=14$.}
\begin{itemize}[label=\textbullet,leftmargin=1.15em,itemsep=0.65em,topsep=0.12em,parsep=0pt,partopsep=0pt]
	\item \textbf{$S=\{7\}$}, $\lvert\mathcal{C}\rvert=74$.

	      $\mathcal{C}\in$\{ \{1\}, \{3\}, \{5\}, \{9\}, \{11\}, \{13\}, \{1, 3\}, \{1, 5\}, \{1, 9\}, \{1, 11\}, \{3, 5\}, \{3, 9\}, \{3, 13\}, \{5, 11\}, \{5, 13\}, \{9, 11\}, \{9, 13\}, \{11, 13\}, \{1, 2, 3\}, \{1, 2, 10\}, \{1, 3, 5\}, \{1, 3, 9\}, \{1, 3, 12\}, \{1, 4, 5\}, \{1, 4, 12\}, \{1, 5, 10\}, \{1, 5, 11\}, \{1, 9, 11\}, \{2, 3, 6\}, \{2, 6, 11\}, \{2, 10, 13\}, \{2, 11, 13\}, \{3, 5, 13\}, \{3, 6, 9\}, \{3, 8, 9\}, \{3, 8, 12\}, \{3, 9, 13\}, \{4, 5, 6\}, \{4, 6, 9\}, \{4, 9, 13\}, \{4, 12, 13\}, \{5, 6, 11\}, \{5, 8, 10\}, \{5, 8, 11\}, \{5, 11, 13\}, \{8, 9, 10\}, \{8, 11, 12\}, \{9, 10, 13\}, \{9, 11, 13\}, \{11, 12, 13\}, \{1, 2, 5, 8\}, \{1, 2, 6, 9\}, \{1, 3, 4, 8\}, \{1, 3, 6, 10\}, \{1, 4, 6, 11\}, \{1, 5, 6, 12\}, \{1, 8, 9, 12\}, \{1, 8, 10, 11\}, \{2, 3, 4, 9\}, \{2, 3, 5, 10\}, \{2, 4, 5, 11\}, \{2, 5, 6, 13\}, \{2, 8, 9, 13\}, \{2, 9, 10, 11\}, \{3, 4, 5, 12\}, \{3, 4, 6, 13\}, \{3, 8, 10, 13\}, \{3, 9, 10, 12\}, \{4, 8, 11, 13\}, \{4, 9, 11, 12\}, \{5, 8, 12, 13\}, \{5, 10, 11, 12\}, \{6, 9, 12, 13\}, \{6, 10, 11, 13\} \}.

\end{itemize}
\subsection*{$n=16$.}
\begin{itemize}[label=\textbullet,leftmargin=1.15em,itemsep=0.65em,topsep=0.12em,parsep=0pt,partopsep=0pt]
	\item \textbf{$S=\{2, 6, 10, 14\}$}, $\lvert\mathcal{C}\rvert=6$.

	      $\mathcal{C}\in$\{ \{1, 3, 9, 11\}, \{5, 7, 13, 15\}, \{1, 3, 4, 9, 11\}, \{1, 3, 9, 11, 12\}, \{4, 5, 7, 13, 15\}, \{5, 7, 12, 13, 15\} \}.

	\item \textbf{$S=\{4, 12\}$}, $\lvert\mathcal{C}\rvert=8$.

	      $\mathcal{C}\in$\{ \{1, 7\}, \{3, 5\}, \{9, 15\}, \{11, 13\}, \{1, 3, 5, 7\}, \{1, 7, 11, 13\}, \{3, 5, 9, 15\}, \{9, 11, 13, 15\} \}.

\end{itemize}
\subsection*{$n=18$.}
\begin{itemize}[label=\textbullet,leftmargin=1.15em,itemsep=0.65em,topsep=0.12em,parsep=0pt,partopsep=0pt]
	\item \textbf{$S=\{1, 5, 7, 9, 11, 13, 17\}$}, $\lvert\mathcal{C}\rvert=4$.

	      $\mathcal{C}\in$\{ \{2, 3, 8, 14\}, \{2, 8, 14, 15\}, \{3, 4, 10, 16\}, \{4, 10, 15, 16\} \}.

	\item \textbf{$S=\{1, 5, 7, 11, 13, 17\}$}, $\lvert\mathcal{C}\rvert=8$.

	      $\mathcal{C}\in$\{ \{2, 3, 6, 8, 14\}, \{2, 3, 8, 12, 14\}, \{2, 6, 8, 14, 15\}, \{2, 8, 12, 14, 15\}, \{3, 4, 6, 10, 16\}, \{3, 4, 10, 12, 16\}, \{4, 6, 10, 15, 16\}, \{4, 10, 12, 15, 16\} \}.

	\item \textbf{$S=\{3, 9, 15\}$}, $\lvert\mathcal{C}\rvert=114$.

	      $\mathcal{C}\in$\{ \{1, 2, 5\}, \{1, 4, 11\}, \{1, 5, 16\}, \{1, 7, 13\}, \{1, 11, 14\}, \{2, 13, 17\}, \{4, 7, 17\}, \{5, 7, 8\}, \{5, 7, 10\}, \{5, 11, 17\}, \{7, 14, 17\}, \{8, 11, 13\}, \{10, 11, 13\}, \{13, 16, 17\}, \{1, 2, 5, 6\}, \{1, 2, 5, 12\}, \{1, 2, 8, 11\}, \{1, 4, 5, 10\}, \{1, 4, 6, 11\}, \{1, 4, 11, 12\}, \{1, 5, 6, 16\}, \{1, 5, 8, 14\}, \{1, 5, 12, 16\}, \{1, 6, 7, 13\}, \{1, 6, 11, 14\}, \{1, 7, 12, 13\}, \{1, 10, 11, 16\}, \{1, 11, 12, 14\}, \{2, 5, 7, 14\}, \{2, 6, 13, 17\}, \{2, 7, 8, 17\}, \{2, 11, 13, 14\}, \{2, 12, 13, 17\}, \{4, 5, 7, 16\}, \{4, 6, 7, 17\}, \{4, 7, 12, 17\}, \{4, 10, 13, 17\}, \{4, 11, 13, 16\}, \{5, 6, 7, 8\}, \{5, 6, 7, 10\}, \{5, 6, 11, 17\}, \{5, 7, 8, 12\}, \{5, 7, 10, 12\}, \{5, 11, 12, 17\}, \{6, 7, 14, 17\}, \{6, 8, 11, 13\}, \{6, 10, 11, 13\}, \{6, 13, 16, 17\}, \{7, 10, 16, 17\}, \{7, 12, 14, 17\}, \{8, 11, 12, 13\}, \{8, 13, 14, 17\}, \{10, 11, 12, 13\}, \{12, 13, 16, 17\}, \{1, 2, 4, 7, 13\}, \{1, 2, 6, 8, 11\}, \{1, 2, 7, 10, 13\}, \{1, 2, 8, 11, 12\}, \{1, 4, 5, 6, 10\}, \{1, 4, 5, 10, 12\}, \{1, 4, 7, 8, 13\}, \{1, 5, 6, 8, 14\}, \{1, 5, 8, 12, 14\}, \{1, 6, 10, 11, 16\}, \{1, 7, 8, 13, 16\}, \{1, 7, 10, 13, 14\}, \{1, 7, 13, 14, 16\}, \{1, 10, 11, 12, 16\}, \{2, 4, 5, 11, 17\}, \{2, 5, 6, 7, 14\}, \{2, 5, 7, 12, 14\}, \{2, 5, 10, 11, 17\}, \{2, 6, 7, 8, 17\}, \{2, 6, 11, 13, 14\}, \{2, 7, 8, 12, 17\}, \{2, 11, 12, 13, 14\}, \{4, 5, 6, 7, 16\}, \{4, 5, 7, 12, 16\}, \{4, 5, 8, 11, 17\}, \{4, 6, 10, 13, 17\}, \{4, 6, 11, 13, 16\}, \{4, 10, 12, 13, 17\}, \{4, 11, 12, 13, 16\}, \{5, 8, 11, 16, 17\}, \{5, 10, 11, 14, 17\}, \{5, 11, 14, 16, 17\}, \{6, 7, 10, 16, 17\}, \{6, 8, 13, 14, 17\}, \{7, 10, 12, 16, 17\}, \{8, 12, 13, 14, 17\}, \{1, 2, 4, 6, 7, 13\}, \{1, 2, 4, 7, 12, 13\}, \{1, 2, 6, 7, 10, 13\}, \{1, 2, 7, 10, 12, 13\}, \{1, 4, 6, 7, 8, 13\}, \{1, 4, 7, 8, 12, 13\}, \{1, 6, 7, 8, 13, 16\}, \{1, 6, 7, 10, 13, 14\}, \{1, 6, 7, 13, 14, 16\}, \{1, 7, 8, 12, 13, 16\}, \{1, 7, 10, 12, 13, 14\}, \{1, 7, 12, 13, 14, 16\}, \{2, 4, 5, 6, 11, 17\}, \{2, 4, 5, 11, 12, 17\}, \{2, 5, 6, 10, 11, 17\}, \{2, 5, 10, 11, 12, 17\}, \{4, 5, 6, 8, 11, 17\}, \{4, 5, 8, 11, 12, 17\}, \{5, 6, 8, 11, 16, 17\}, \{5, 6, 10, 11, 14, 17\}, \{5, 6, 11, 14, 16, 17\}, \{5, 8, 11, 12, 16, 17\}, \{5, 10, 11, 12, 14, 17\}, \{5, 11, 12, 14, 16, 17\} \}.

	\item \textbf{$S=\{9\}$}, $\lvert\mathcal{C}\rvert=184$.

	      $\mathcal{C}\in$\{ \{1\}, \{5\}, \{7\}, \{11\}, \{13\}, \{17\}, \{1, 3\}, \{1, 5\}, \{1, 7\}, \{1, 11\}, \{1, 13\}, \{1, 15\}, \{3, 5\}, \{3, 7\}, \{3, 11\}, \{3, 13\}, \{3, 17\}, \{5, 7\}, \{5, 11\}, \{5, 15\}, \{5, 17\}, \{7, 13\}, \{7, 15\}, \{7, 17\}, \{11, 13\}, \{11, 15\}, \{11, 17\}, \{13, 15\}, \{13, 17\}, \{15, 17\}, \{1, 2, 4\}, \{1, 3, 5\}, \{1, 3, 7\}, \{1, 3, 11\}, \{1, 3, 13\}, \{1, 5, 7\}, \{1, 5, 11\}, \{1, 5, 15\}, \{1, 7, 15\}, \{1, 11, 13\}, \{1, 11, 15\}, \{1, 13, 15\}, \{1, 14, 16\}, \{2, 4, 17\}, \{2, 5, 10\}, \{2, 10, 13\}, \{3, 5, 7\}, \{3, 5, 11\}, \{3, 5, 17\}, \{3, 7, 13\}, \{3, 7, 17\}, \{3, 11, 13\}, \{3, 11, 17\}, \{3, 13, 17\}, \{4, 7, 8\}, \{4, 8, 11\}, \{5, 7, 15\}, \{5, 7, 17\}, \{5, 8, 16\}, \{5, 11, 15\}, \{5, 15, 17\}, \{7, 10, 14\}, \{7, 13, 15\}, \{7, 13, 17\}, \{7, 15, 17\}, \{8, 13, 16\}, \{10, 11, 14\}, \{11, 13, 15\}, \{11, 13, 17\}, \{11, 15, 17\}, \{13, 15, 17\}, \{14, 16, 17\}, \{1, 2, 3, 4\}, \{1, 2, 4, 15\}, \{1, 2, 7, 10\}, \{1, 3, 5, 7\}, \{1, 3, 5, 11\}, \{1, 3, 7, 13\}, \{1, 3, 11, 13\}, \{1, 3, 14, 16\}, \{1, 4, 8, 13\}, \{1, 5, 7, 15\}, \{1, 5, 11, 15\}, \{1, 7, 8, 16\}, \{1, 7, 13, 15\}, \{1, 10, 13, 14\}, \{1, 11, 13, 15\}, \{1, 14, 15, 16\}, \{2, 3, 4, 17\}, \{2, 3, 5, 10\}, \{2, 3, 10, 13\}, \{2, 4, 5, 11\}, \{2, 4, 7, 13\}, \{2, 4, 15, 17\}, \{2, 5, 10, 15\}, \{2, 10, 11, 17\}, \{2, 10, 13, 15\}, \{3, 4, 7, 8\}, \{3, 4, 8, 11\}, \{3, 5, 7, 17\}, \{3, 5, 8, 16\}, \{3, 5, 11, 17\}, \{3, 7, 10, 14\}, \{3, 7, 13, 17\}, \{3, 8, 13, 16\}, \{3, 10, 11, 14\}, \{3, 11, 13, 17\}, \{3, 14, 16, 17\}, \{4, 5, 8, 17\}, \{4, 7, 8, 15\}, \{4, 8, 11, 15\}, \{5, 7, 15, 17\}, \{5, 8, 15, 16\}, \{5, 10, 14, 17\}, \{5, 11, 14, 16\}, \{5, 11, 15, 17\}, \{7, 10, 14, 15\}, \{7, 13, 14, 16\}, \{7, 13, 15, 17\}, \{8, 11, 16, 17\}, \{8, 13, 15, 16\}, \{10, 11, 14, 15\}, \{11, 13, 15, 17\}, \{14, 15, 16, 17\}, \{1, 2, 3, 7, 10\}, \{1, 2, 5, 8, 14\}, \{1, 2, 7, 10, 15\}, \{1, 2, 8, 11, 14\}, \{1, 3, 4, 8, 13\}, \{1, 3, 7, 8, 16\}, \{1, 3, 10, 13, 14\}, \{1, 4, 5, 10, 16\}, \{1, 4, 8, 13, 15\}, \{1, 4, 10, 11, 16\}, \{1, 7, 8, 15, 16\}, \{1, 10, 13, 14, 15\}, \{2, 3, 4, 5, 11\}, \{2, 3, 4, 7, 13\}, \{2, 3, 10, 11, 17\}, \{2, 4, 5, 11, 15\}, \{2, 4, 7, 13, 15\}, \{2, 5, 7, 8, 14\}, \{2, 7, 8, 14, 17\}, \{2, 8, 11, 13, 14\}, \{2, 8, 13, 14, 17\}, \{2, 10, 11, 15, 17\}, \{3, 4, 5, 8, 17\}, \{3, 5, 10, 14, 17\}, \{3, 5, 11, 14, 16\}, \{3, 7, 13, 14, 16\}, \{3, 8, 11, 16, 17\}, \{4, 5, 7, 10, 16\}, \{4, 5, 8, 15, 17\}, \{4, 7, 10, 16, 17\}, \{4, 10, 11, 13, 16\}, \{4, 10, 13, 16, 17\}, \{5, 10, 14, 15, 17\}, \{5, 11, 14, 15, 16\}, \{7, 13, 14, 15, 16\}, \{8, 11, 15, 16, 17\}, \{1, 2, 3, 5, 8, 14\}, \{1, 2, 3, 8, 11, 14\}, \{1, 2, 5, 8, 14, 15\}, \{1, 2, 8, 11, 14, 15\}, \{1, 3, 4, 5, 10, 16\}, \{1, 3, 4, 10, 11, 16\}, \{1, 4, 5, 10, 15, 16\}, \{1, 4, 10, 11, 15, 16\}, \{2, 3, 5, 7, 8, 14\}, \{2, 3, 7, 8, 14, 17\}, \{2, 3, 8, 11, 13, 14\}, \{2, 3, 8, 13, 14, 17\}, \{2, 5, 7, 8, 14, 15\}, \{2, 7, 8, 14, 15, 17\}, \{2, 8, 11, 13, 14, 15\}, \{2, 8, 13, 14, 15, 17\}, \{3, 4, 5, 7, 10, 16\}, \{3, 4, 7, 10, 16, 17\}, \{3, 4, 10, 11, 13, 16\}, \{3, 4, 10, 13, 16, 17\}, \{4, 5, 7, 10, 15, 16\}, \{4, 7, 10, 15, 16, 17\}, \{4, 10, 11, 13, 15, 16\}, \{4, 10, 13, 15, 16, 17\} \}.

\end{itemize}
\subsection*{$n=20$.}
\begin{itemize}[label=\textbullet,leftmargin=1.15em,itemsep=0.65em,topsep=0.12em,parsep=0pt,partopsep=0pt]
	\item \textbf{$S=\{1, 3, 7, 9, 11, 13, 17, 19\}$}, $\lvert\mathcal{C}\rvert=16$.

	      $\mathcal{C}\in$\{ \{2, 5, 12\}, \{2, 12, 15\}, \{4, 5, 14\}, \{4, 14, 15\}, \{5, 6, 16\}, \{5, 8, 18\}, \{6, 15, 16\}, \{8, 15, 18\}, \{2, 4, 5, 12, 14\}, \{2, 4, 12, 14, 15\}, \{2, 5, 6, 12, 16\}, \{2, 6, 12, 15, 16\}, \{4, 5, 8, 14, 18\}, \{4, 8, 14, 15, 18\}, \{5, 6, 8, 16, 18\}, \{6, 8, 15, 16, 18\} \}.

	\item \textbf{$S=\{5, 10, 15\}$}, $\lvert\mathcal{C}\rvert=32$.

	      $\mathcal{C}\in$\{\{1, 2, 8, 11\}, \{1, 4, 6, 11\}, \{1, 11, 12, 18\}, \{1, 11, 14, 16\}, \{2, 3, 8, 13\}, \{2, 7, 8, 17\}, \{2, 8, 9, 19\}, \{3, 4, 6, 13\}, \{3, 12, 13, 18\}, \{3, 13, 14, 16\}, \{4, 6, 7, 17\}, \{4, 6, 9, 19\}, \{7, 12, 17, 18\}, \{7, 14, 16, 17\}, \{9, 12, 18, 19\}, \{9, 14, 16, 19\}, \{1, 2, 3, 4, 6, 8, 11, 13\}, \{1, 2, 3, 8, 11, 13, 14, 16\}, \{1, 2, 4, 6, 7, 8, 11, 17\}, \{1, 2, 7, 8, 11, 14, 16, 17\}, \{1, 3, 4, 6, 11, 12, 13, 18\}, \{1, 3, 11, 12, 13, 14, 16, 18\}, \{1, 4, 6, 7, 11, 12, 17, 18\}, \{1, 7, 11, 12, 14, 16, 17, 18\}, \{2, 3, 4, 6, 8, 9, 13, 19\}, \{2, 3, 8, 9, 13, 14, 16, 19\}, \{2, 4, 6, 7, 8, 9, 17, 19\}, \{2, 7, 8, 9, 14, 16, 17, 19\}, \{3, 4, 6, 9, 12, 13, 18, 19\}, \{3, 9, 12, 13, 14, 16, 18, 19\}, \{4, 6, 7, 9, 12, 17, 18, 19\}, \{7, 9, 12, 14, 16, 17, 18, 19\} \}.

	\item \textbf{$S=\{5, 15\}$}, $\lvert\mathcal{C}\rvert=98$.

	      $\mathcal{C}\in$\{ \{1, 3\}, \{1, 7\}, \{1, 11\}, \{3, 9\}, \{3, 13\}, \{7, 9\}, \{7, 17\}, \{9, 19\}, \{11, 13\}, \{11, 17\}, \{13, 19\}, \{17, 19\}, \{1, 2, 11, 12\}, \{1, 3, 7, 9\}, \{1, 3, 11, 13\}, \{1, 7, 11, 17\}, \{1, 8, 11, 18\}, \{2, 9, 12, 19\}, \{3, 4, 13, 14\}, \{3, 6, 13, 16\}, \{3, 9, 13, 19\}, \{4, 7, 14, 17\}, \{6, 7, 16, 17\}, \{7, 9, 17, 19\}, \{8, 9, 18, 19\}, \{11, 13, 17, 19\}, \{1, 2, 3, 7, 9, 12\}, \{1, 2, 3, 8, 11, 13\}, \{1, 2, 4, 6, 8, 11\}, \{1, 2, 4, 6, 11, 12\}, \{1, 2, 7, 8, 11, 17\}, \{1, 2, 8, 11, 14, 16\}, \{1, 2, 11, 12, 14, 16\}, \{1, 3, 4, 6, 11, 13\}, \{1, 3, 4, 7, 9, 14\}, \{1, 3, 6, 7, 9, 16\}, \{1, 3, 7, 8, 9, 18\}, \{1, 3, 11, 12, 13, 18\}, \{1, 3, 11, 13, 14, 16\}, \{1, 4, 6, 7, 11, 17\}, \{1, 4, 6, 8, 11, 18\}, \{1, 4, 6, 11, 12, 18\}, \{1, 7, 11, 12, 17, 18\}, \{1, 7, 11, 14, 16, 17\}, \{1, 8, 11, 14, 16, 18\}, \{1, 11, 12, 14, 16, 18\}, \{2, 3, 4, 6, 8, 13\}, \{2, 3, 4, 8, 13, 14\}, \{2, 3, 6, 8, 13, 16\}, \{2, 3, 8, 9, 13, 19\}, \{2, 3, 8, 13, 14, 16\}, \{2, 4, 6, 7, 8, 17\}, \{2, 4, 6, 8, 9, 19\}, \{2, 4, 6, 9, 12, 19\}, \{2, 4, 7, 8, 14, 17\}, \{2, 6, 7, 8, 16, 17\}, \{2, 7, 8, 9, 17, 19\}, \{2, 7, 8, 14, 16, 17\}, \{2, 8, 9, 14, 16, 19\}, \{2, 9, 12, 14, 16, 19\}, \{2, 11, 12, 13, 17, 19\}, \{3, 4, 6, 9, 13, 19\}, \{3, 4, 6, 12, 13, 18\}, \{3, 4, 12, 13, 14, 18\}, \{3, 6, 12, 13, 16, 18\}, \{3, 9, 12, 13, 18, 19\}, \{3, 9, 13, 14, 16, 19\}, \{3, 12, 13, 14, 16, 18\}, \{4, 6, 7, 9, 17, 19\}, \{4, 6, 7, 12, 17, 18\}, \{4, 6, 8, 9, 18, 19\}, \{4, 6, 9, 12, 18, 19\}, \{4, 7, 12, 14, 17, 18\}, \{4, 11, 13, 14, 17, 19\}, \{6, 7, 12, 16, 17, 18\}, \{6, 11, 13, 16, 17, 19\}, \{7, 9, 12, 17, 18, 19\}, \{7, 9, 14, 16, 17, 19\}, \{7, 12, 14, 16, 17, 18\}, \{8, 9, 14, 16, 18, 19\}, \{8, 11, 13, 17, 18, 19\}, \{9, 12, 14, 16, 18, 19\}, \{1, 2, 3, 4, 7, 9, 12, 14\}, \{1, 2, 3, 6, 7, 9, 12, 16\}, \{1, 2, 3, 6, 11, 12, 13, 16\}, \{1, 2, 4, 7, 11, 12, 14, 17\}, \{1, 3, 4, 7, 8, 9, 14, 18\}, \{1, 3, 4, 8, 11, 13, 14, 18\}, \{1, 3, 6, 7, 8, 9, 16, 18\}, \{1, 6, 7, 8, 11, 16, 17, 18\}, \{2, 3, 4, 9, 12, 13, 14, 19\}, \{2, 4, 11, 12, 13, 14, 17, 19\}, \{2, 6, 7, 9, 12, 16, 17, 19\}, \{2, 6, 11, 12, 13, 16, 17, 19\}, \{3, 6, 8, 9, 13, 16, 18, 19\}, \{4, 7, 8, 9, 14, 17, 18, 19\}, \{4, 8, 11, 13, 14, 17, 18, 19\}, \{6, 8, 11, 13, 16, 17, 18, 19\} \}.
\end{itemize}

\begin{thebibliography}{18}
	\expandafter\ifx\csname natexlab\endcsname\relax\def\natexlab#1{#1}\fi
	\providecommand{\url}[1]{\texttt{#1}}
	\providecommand{\href}[2]{#2}
	\providecommand{\path}[1]{#1}
	\providecommand{\DOIprefix}{doi:}
	\providecommand{\ArXivprefix}{arXiv:}
	\providecommand{\URLprefix}{URL: }
	\providecommand{\Pubmedprefix}{pmid:}
	\providecommand{\doi}[1]{\href{http://dx.doi.org/#1}{\path{#1}}}
	\providecommand{\Pubmed}[1]{\href{pmid:#1}{\path{#1}}}
	\providecommand{\bibinfo}[2]{#2}
	\ifx\xfnm\relax \def\xfnm[#1]{\unskip,\space#1}\fi
	%Type = Article
	\bibitem[{Ben~Taher and Rachidi(2002)}]{BR02}
	\bibinfo{author}{R.~Ben~Taher}, \bibinfo{author}{M.~Rachidi},
	\newblock \bibinfo{title}{Some explicit formulas for the polynomial
		decomposition of the matrix exponential and applications},
	\newblock \bibinfo{journal}{\emph{Linear Algebra Appl.}}   \textbf{350}
	(\bibinfo{year}{2002}) \bibinfo{pages}{171--184}.
	%Type = Article
	\bibitem[{Bose(2003)}]{Bo03}
	\bibinfo{author}{S.~Bose},
	\newblock \bibinfo{title}{Quantum communication through an unmodulated spin
		chain},
	\newblock \bibinfo{journal}{\emph{Phys. Rev. Lett.}}   \textbf{91}
	(\bibinfo{year}{2003}) \bibinfo{pages}{207901}.
	%Type = Article
	\bibitem[{Cameron et~al.(2014)Cameron, Fehrenbach, Granger, Hennigh, Shrestha,
			and Tamon}]{CFGHST14}
	\bibinfo{author}{S.~Cameron}, \bibinfo{author}{S.~Fehrenbach},
	\bibinfo{author}{L.~Granger}, \bibinfo{author}{O.~Hennigh},
	\bibinfo{author}{S.~Shrestha}, \bibinfo{author}{C.~Tamon},
	\newblock \bibinfo{title}{Universal state transfer on graphs},
	\newblock \bibinfo{journal}{\emph{Linear Algebra Appl.}}   \textbf{455}
	(\bibinfo{year}{2014}) \bibinfo{pages}{115--142}.
	%Type = Article
	\bibitem[{Chaves et~al.(2023)Chaves, Chagas, and Coutinho}]{CCC23}
	\bibinfo{author}{R.~Chaves}, \bibinfo{author}{B.~Chagas},
	\bibinfo{author}{G.~Coutinho},
	\newblock \bibinfo{title}{Why and how to add direction to a quantum walk},
	\newblock \bibinfo{journal}{\emph{Quantum Inf. Process.}}   \textbf{22}~(1)
	(\bibinfo{year}{2023}) \bibinfo{pages}{Paper No. 41}.
	%Type = Article
	\bibitem[{Connelly et~al.(2017)Connelly, Grammel, Kraut, Serazo, and
			Tamon}]{CGKST17}
	\bibinfo{author}{E.~Connelly}, \bibinfo{author}{N.~Grammel},
	\bibinfo{author}{M.~Kraut}, \bibinfo{author}{L.~Serazo},
	\bibinfo{author}{C.~Tamon},
	\newblock \bibinfo{title}{Universality in perfect state transfer},
	\newblock \bibinfo{journal}{\emph{Linear Algebra Appl.}}   \textbf{531}
	(\bibinfo{year}{2017}) \bibinfo{pages}{516--532}.
	%Type = Book
	\bibitem[{Coutinho and Godsil(2021)}]{CG21b}
	\bibinfo{author}{G.~Coutinho}, \bibinfo{author}{C.~Godsil},
	\bibinfo{title}{Graph spectra and continuous quantum walks},
	\bibinfo{publisher}{In preparation}, \bibinfo{year}{2021}.
	%Type = Book
	\bibitem[{Godsil(1993)}]{G93}
	\bibinfo{author}{C.~Godsil}, \bibinfo{title}{Algebraic combinatorics},
	\bibinfo{publisher}{New York: Chapman \& Hall}, \bibinfo{year}{1993}.
	%Type = Article
	\bibitem[{Godsil(2011)}]{Go08}
	\bibinfo{author}{C.~Godsil},
	\newblock \bibinfo{title}{Periodic graphs},
	\newblock \bibinfo{journal}{\emph{Electron. J. Combin.}}   \textbf{18}~(1)
	(\bibinfo{year}{2011}) \bibinfo{pages}{\#P23}.
	%Type = Article
	\bibitem[{Godsil(2012{\natexlab{b}})}]{Go12b}
	\bibinfo{author}{C.~Godsil},
	\newblock \bibinfo{title}{State transfer on graphs},
	\newblock \bibinfo{journal}{\emph{Discrete Math.}}   \textbf{312}~(1)
	(\bibinfo{year}{2012}{\natexlab{b}}) \bibinfo{pages}{129--147}.
	%Type = Article
	\bibitem[{Godsil(2012{\natexlab{a}})}]{Go12a}
	\bibinfo{author}{C.~Godsil},
	\newblock \bibinfo{title}{When can perfect state transfer occur?},
	\newblock \bibinfo{journal}{\emph{Electron. J. Linear Algebra}}   \textbf{23}
	(\bibinfo{year}{2012}{\natexlab{a}}) \bibinfo{pages}{877--890}.
	%Type = Book
	\bibitem[{Godsil and Royle(2001)}]{GR01}
	\bibinfo{author}{C.~Godsil}, \bibinfo{author}{G.~Royle},
	\bibinfo{title}{Algebraic graph theory}, volume \bibinfo{volume}{207} of
	\emph{\bibinfo{series}{Graduate Texts in Mathematics}},
	\bibinfo{publisher}{Springer-Verlag, New York}, \bibinfo{year}{2001}.
	%Type = Article
	\bibitem[{Guo and Mohar(2017)}]{GM17}
	\bibinfo{author}{K.~Guo}, \bibinfo{author}{B.~Mohar},
	\newblock \bibinfo{title}{Hermitian adjacency matrix of digraphs and mixed
		graphs},
	\newblock \bibinfo{journal}{\emph{J. Graph Theory}}   \textbf{85}~(1)
	(\bibinfo{year}{2017}) \bibinfo{pages}{217--248}.
	%Type = Article
	\bibitem[{Kadyan and Bhattacharjya(2023)}]{MB21a}
	\bibinfo{author}{M.~Kadyan}, \bibinfo{author}{B.~Bhattacharjya},
	\newblock \bibinfo{title}{Integral mixed circulant graphs},
	\newblock \bibinfo{journal}{\emph{Discrete Math.}}   \textbf{346}~(1)
	(\bibinfo{year}{2023}) \bibinfo{pages}{113142}.
	%Type = Article
	\bibitem[{Liu and Li(2015)}]{LL15}
	\bibinfo{author}{J.~Liu}, \bibinfo{author}{X.~Li},
	\newblock \bibinfo{title}{Hermitian-adjacency matrices and {H}ermitian energies
		of mixed graphs},
	\newblock \bibinfo{journal}{\emph{Linear Algebra Appl.}}   \textbf{466}
	(\bibinfo{year}{2015}) \bibinfo{pages}{182--207}.
	%Type = Article
	\bibitem[{Sett et~al.(2019)Sett, Pan, Falloon, and Wang}]{SPFW19}
	\bibinfo{author}{A.~Sett}, \bibinfo{author}{H.~Pan}, \bibinfo{author}{P.~E.
		Falloon}, \bibinfo{author}{J.~B. Wang},
	\newblock \bibinfo{title}{Zero transfer in continuous-time quantum walks},
	\newblock \bibinfo{journal}{\emph{Quantum Inf. Process.}}   \textbf{18}~(5)
	(\bibinfo{year}{2019}) \bibinfo{pages}{Paper No. 159, 18}.
	%Type = Article
	\bibitem[{Song(2024)}]{S22}
	\bibinfo{author}{X.~Song},
	\newblock \bibinfo{title}{Quantum state transfer on integral oriented circulant
		graphs},
	\newblock \bibinfo{journal}{\emph{Appl. Math. Comput.}}   \textbf{464}
	(\bibinfo{year}{2024}) \bibinfo{pages}{Paper No. 128391}.
	%Type = Article
	\bibitem[{Song and Lin(2024)}]{SL22}
	\bibinfo{author}{X.~Song}, \bibinfo{author}{H.~Lin},
	\newblock \bibinfo{title}{State transfer on integral mixed circulant graphs},
	\newblock \bibinfo{journal}{\emph{Discrete Math.}}   \textbf{347}~(1)
	(\bibinfo{year}{2024}) \bibinfo{pages}{Paper No. 113727, 13}.
	%Type = Manual
	\bibitem[{{The Sage Developers}(2023)}]{sage}
	\bibinfo{author}{{The Sage Developers}}, \bibinfo{title}{{S}age {M}athematics
			{S}oftware ({V}ersion 10.2)}, \bibinfo{organization}{The Sage Development
		Team}, \bibinfo{year}{2023}. \bibinfo{note}{{\tt https://www.sagemath.org}}.

\end{thebibliography}
\end{document}